\documentclass[11pt]{amsart}

\usepackage[margin=1.05in]{geometry}
\usepackage{amsmath,amssymb,mathtools}
\usepackage{enumitem}
\usepackage{microtype}
\usepackage{xcolor}
\usepackage{hyperref}
\usepackage{comment}

\definecolor{linkblue}{RGB}{25,58,92}

\hypersetup{
  colorlinks=true,
  linkcolor=linkblue,
  citecolor=linkblue,
  urlcolor=linkblue,
  pdftitle={Classification of Maximally Charged Black Holes},
  pdfauthor={Sven Hirsch and Yiyue Zhang}
}

\numberwithin{equation}{section}

\newtheorem{theorem}{Theorem}[section]
\newtheorem{proposition}[theorem]{Proposition}
\newtheorem{lemma}[theorem]{Lemma}

\theoremstyle{definition}
\newtheorem{definition}[theorem]{Definition}

\theoremstyle{remark}
\newtheorem{remark}[theorem]{Remark}

\newcommand{\R}{\mathbb{R}}

\title{Classification of Maximally Charged Black Holes}

\author{Sven Hirsch}
\address{
Department of Mathematics,
Columbia University,
2990 Broadway,
New York, NY 10027, USA
}
\email{sven.hirsch@columbia.edu}

\author{Yiyue Zhang}
\address{
Beijing Institute of Mathematical Sciences and Applications,
Beijing 101408, China
}
\email{zhangyiyue@bimsa.cn}

\begin{document}

\maketitle

\begin{abstract}
We characterize all maximally charged black hole spacetimes in $3+1$ dimensions.
More precisely, given any initial data set $(M,g,k)$ saturating the mass-charge inequality that satisfies the charged dominant energy condition, we show that $(M,g,k)$ must arise from an isometric embedding into a Majumdar-Papapetrou spacetime.
\end{abstract}

\section{Introduction}

Electromagnetic fields play a fundamental role in the physics of black
holes. Accreting black holes are capable of
supporting strong magnetic fields, and magnetic flux near the
event horizon is believed to be a central ingredient in the production of
relativistic jets. The classical Blandford--Znajek mechanism
\cite{BlandfordZnajek}, for instance, extracts rotational energy from a black
hole through electromagnetic fields threading the horizon. More recently,
observations by the Event Horizon Telescope have directly resolved
magnetic-field structures  around M87*
\cite{EHTM87} and Sagittarius A* \cite{EHTSgrA}.

\medskip

Nevertheless, the mathematical properties of charged black holes remain
remarkably little understood, as the vast majority of works assume the
absence of electromagnetic fields. A central conjecture, going back to
Gibbons--Hull \cite{GH}, Tod \cite{Tod}, and
Gibbons--Hawking--Horowitz--Perry \cite{GHHP}
in the early 1980s, proposes a characterization of all maximally charged
black holes. Progress toward this conjecture has a long history, with partial
results due to Bartnik--Chruściel \cite{BartnikChrusciel},
Chruściel--Reall--Tod \cite{CRT},
Khuri--Weinstein \cite{KhuriWeinstein}, Costa \cite{Costa}, Bray-Hirsch-Kazaras-Khuri-Zhang \cite{BHKKZ}, 
and Raulot \cite{Raulot, RaulotRigidity}, among others.

\medskip

We begin by recalling the mass inequality in the charged setting, cf.
\cite{GH,BartnikChrusciel}:

\begin{theorem}
\label{PMT}
Let $(M^3,g,k,E,B)$ be an admissible initial data set\footnote{See Definition \ref{admissible}.}
satisfying the charged dominant energy condition $\mu\ge|J|$, where
 \[
        \mu
        :=
        \frac{1}{2}
        \left(
            R_g-|k|_g^2+(\operatorname{tr}_g k)^2
        \right)
        -|E|_g^2-|B|_g^2,
\qquad
 J
        :=
        \operatorname{div}_g
        \bigl(k-(\operatorname{tr}_g k)g\bigr)
        +2E\times_g B.
    \]
    Then, at every
asymptotically flat end, the ADM energy $\mathcal E$, ADM momentum $P$,
total electric charge $Q_E$, and total magnetic charge $Q_B$ satisfy
\[
    \mathcal E
    \ge \sqrt{|P|^2+Q_E^2+Q_B^2}.
\]
\end{theorem}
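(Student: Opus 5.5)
We prove Theorem \ref{PMT} with a Witten-type spinor argument adapted to the Einstein--Maxwell setting, as in Gibbons--Hull \cite{GH} and Bartnik--Chru\'sciel \cite{BartnikChrusciel}. The charged hypersurface Dirac operator $\widetilde D$ acts on sections of the spinor bundle $\mathbb S = S\otimes\mathbb C^2$ (two copies of the spacetime spinor bundle restricted to $M$). It is the sum of three parts: the Dirac operator built from the spacetime connection $\widetilde\nabla_i = \nabla_i + \tfrac12 k_{ij}e^j e^0$, plus zeroth-order terms in $E$ and $B$. The Clifford action of the field terms is arranged so that the associated connection $\widetilde\nabla^{EB}$ obeys a Weitzenb\"ock--Lichnerowicz identity of the form
\[
\int_M \bigl(|\widetilde\nabla^{EB}\psi|^2 + \langle \psi,(\mu + J\cdot e^0)\psi\rangle - |\widetilde D\psi|^2\bigr)
= \lim_{r\to\infty}\int_{S_r}\langle \psi, \mathcal U(\psi)\rangle,
\]
where $\mathcal U$ is the Nester--Witten boundary form. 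The first step is to verify this identity by a direct computation: expand $\widetilde D^2$ and collect the curvature terms, which produce $\tfrac12(R_g - |k|^2 + (\operatorname{tr} k)^2)$ and $\operatorname{div}(k - (\operatorname{tr} k)g)$. The terms quadratic in $E,B$ give $-|E|^2-|B|^2$ and the Poynting term $2E\times B$. The cross terms between $\nabla$ and the field terms integrate by parts into boundary fluxes. This matching is exactly what makes $\mu$ and $J$ as defined in the statement appear. Since $\mu\ge|J|$ and $J\cdot e^0$ has pointwise operator norm $|J|$, the bulk zeroth-order term is nonnegative.

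The second step is to solve $\widetilde D\psi = 0$ with $\psi\to\psi_0$ at the chosen end, where $\psi_0$ is an arbitrary constant spinor. On the remaining ends we impose decay; if inner boundaries are permitted by Definition \ref{admissible}, we impose a suitable chiral or MIT-bag boundary condition on the horizon boundary, so that the boundary term there has a favorable sign. Existence follows from a Lax--Milgram argument in the weighted space $\overline{C_c^\infty}^{\|\widetilde\nabla\cdot\|_{L^2}}$. Coercivity comes from the Weitzenb\"ock identity together with the nonnegativity of the potential and a weighted Poincar\'e/Hardy inequality on asymptotically flat manifolds. Admissibility supplies the decay rates of $g$, $k$, $E$, $B$, which ensure that $\widetilde D\psi_0$ lies in $L^2$ after cutoff.

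The third step evaluates the boundary integral at infinity. The metric and $k$ contribute the familiar $\tfrac{1}{4}\cdot 4\pi\,\langle\psi_0,(\mathcal E + P_i e^i e^0)\psi_0\rangle$. The field terms contribute $\langle\psi_0,(Q_E\,\alpha + Q_B\,\beta)\psi_0\rangle$, with $\alpha,\beta$ Hermitian involutions built from $e^0$ and the internal $\mathbb C^2$ factor. Together this gives $\langle\psi_0,(\mathcal E - \Gamma)\psi_0\rangle\ge0$ for all $\psi_0$, where $\Gamma = -P_ie^ie^0 - Q_E\alpha - Q_B\beta$ is a Hermitian matrix. We choose the Clifford representation so that the three pieces pairwise anticommute and each squares to the identity times its coefficient squared. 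Then $\Gamma^2 = (|P|^2+Q_E^2+Q_B^2)\,\mathrm{Id}$, so the largest eigenvalue of $\Gamma$ is $\sqrt{|P|^2+Q_E^2+Q_B^2}$. Taking $\psi_0$ to be a corresponding eigenvector gives the inequality.

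The main obstacle is the algebraic design in the first and third steps. We need a single zeroth-order modification of the connection that simultaneously produces exactly $-|E|^2-|B|^2$ in $\mu$ and $2E\times B$ in $J$, and whose boundary charge operators anticommute with the momentum term, so that the eigenvalue is the full norm $\sqrt{|P|^2+Q_E^2+Q_B^2}$ rather than something weaker. I would first try the connection
\[
\widetilde\nabla^{EB}_i = \widetilde\nabla_i + \tfrac12 E_j e^j e^i e^0\otimes\sigma_1 + \tfrac12 B_j e^j e^i\omega\otimes\sigma_2,
\]
with $\omega$ the volume element and $\sigma_a$ the Pauli matrices, which separates the electric and magnetic contributions. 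I would then check the anticommutation relations and compute the curvature identities, using $\operatorname{div}E = \operatorname{div}B = 0$, which I expect Definition \ref{admissible} to assume.
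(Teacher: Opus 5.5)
Your overall strategy is the one the paper relies on for this statement: it cites \cite{GH,BartnikChrusciel} and records the ingredients as Proposition~\ref{existence} and the identity \eqref{eq:globalwitten}. However, the step you yourself flag as the main obstacle is left unresolved, and the explicit coupling you propose fails exactly there.

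For $\widehat\nabla_i=\nabla_i+A_i$, the field part of the boundary integrand is $\langle\psi,(A_\nu+e_\nu e_iA_i)\psi\rangle$. Using $\sum_i e_ie_je_i=e_j$, your $A_i$ produce the charge operators $\alpha=e_0\otimes\sigma_1$ and $\beta=\chi\otimes\sigma_2$. Here I take $\omega=\chi=e_1e_2e_3$, which is the choice that makes $\beta$ Hermitian. Since $\chi$ anticommutes with $e_0$ and $\sigma_1$ anticommutes with $\sigma_2$, the operators $\alpha$ and $\beta$ \emph{commute}, and
\[
\Gamma^2=\bigl(|P|^2+Q_E^2+Q_B^2\bigr)\mathrm{Id}+2Q_EQ_B\,\alpha\beta ,
\]
where $\alpha\beta$ is a self-adjoint involution with both eigenvalues $\pm1$. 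Your eigenvector argument would therefore give $\mathcal E\ge\sqrt{|P|^2+(|Q_E|+|Q_B|)^2}$. That is false. The extremal Reissner--Nordstr\"om slice, rotated by the duality angle $\pi/4$, is admissible, has $P=0$ and $\mu=J=0$, and satisfies $\mathcal E=\sqrt{Q_E^2+Q_B^2}<|Q_E|+|Q_B|$.

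So the Weitzenb\"ock step cannot work with your ansatz either. The $E$--$B$ cross terms in $|\widehat{\mathcal D}\psi|^2-|\widehat\nabla\psi|^2$ now carry the internal factor $\sigma_2\sigma_1=-i\sigma_3$. After taking real parts, the Poynting contribution $(E\times B)_me_me_0$ drops out. It is replaced by a term proportional to $(E\cdot B)\,\chi e_0\otimes i\sigma_3$, which is indefinite and not controlled by $\mu\ge|J|$.

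The missing idea is that no internal doubling is needed. The fix is the paper's connection on $\overline{\mathcal S}=\mathcal S\oplus\mathcal S$:
\[
\widehat\nabla_i=\nabla_i+\tfrac12k_{ij}e_je_0-\tfrac12E_je_je_ie_0+\tfrac12B_je_je_i\chi .
\]
Because $\chi$ commutes with every $e_i$ and anticommutes with $e_0$, the terms $P_ie_ie_0$, $Q_Ee_0$ and $-Q_B\chi$ in the boundary operator of \eqref{eq:globalwitten} pairwise anticommute, and each squares to its coefficient squared. In the bulk, the $E\cdot B$ part of the cross term pairs with the skew-adjoint $\chi e_0$ and vanishes under $\operatorname{Re}$. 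What remains is exactly the $2E\times B$ term in $J$.

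Separately, your existence step does not match Definition~\ref{admissible}. There are no inner boundaries, only cylindrical ends, and there $E\to\mathfrak e_\ell\partial_s$ and $B\to\mathfrak b_\ell\partial_s$ do not decay. AF-type Hardy inequalities therefore give neither coercivity nor vanishing of the flux on those ends. The paper instead works in hybrid spaces ($W^{1,2}_\delta$ on AF ends, $H^1$ on cylinders) with Lockhart--McOwen Fredholm theory. This needs a spectral gap for $A_{S^2}+t(-\tfrac{\mathfrak e_\ell}{2}e_0+\tfrac{\mathfrak b_\ell}{2}\chi)$, which relies on $\mathfrak e_\ell^2+\mathfrak b_\ell^2=m_\ell^{-2}$, a consequence of $\mu\in L^1$ (Remark~\ref{ebm}). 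Index zero then follows by deforming to the self-adjoint operator $\mathcal D-\tfrac12(\operatorname{tr}_gk)e_0$, which is injective by the uncharged dominant energy condition. Your Lax--Milgram scheme needs an equivalent input on the cylindrical ends.
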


The main result of this paper is the corresponding rigidity theorem.

\begin{theorem}
\label{T:main}
Under the assumptions of Theorem \ref{PMT}, suppose that
\[
    \mathcal E^2
    =|P|^2+Q_E^2+Q_B^2.
\]
Then $(M^3,g,k)$ admits an isometric embedding
as a spacelike hypersurface in a Majumdar--Papapetrou\footnote{See Definition \ref{Def:MP}.} spacetime.
\end{theorem}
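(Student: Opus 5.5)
The plan follows the spinorial (Witten-type) proof of Theorem \ref{PMT} and upgrades equality into the existence of many parallel spinors for a charged connection. These spinors then generate a Killing development that can be identified with Majumdar--Papapetrou.

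\textbf{Step 1: imaginary Killing spinors.} On the spinor bundle $S$ of $M$ (doubled if needed to accommodate $k$ and the Clifford action of $E,B$), consider the modified connection
\[
\widehat\nabla_i\psi=\nabla_i\psi+\tfrac12 k_{ij}e_je_0\psi+\tfrac12 E_je_je_ie_0\psi+\tfrac12 B_j \ast(e_je_i)\,e_0\psi,
\]
together with its Dirac operator $\widehat D$. The charged dominant energy condition $\mu\ge|J|$ makes the Schrödinger--Lichnerowicz boundary term nonnegative. Solving $\widehat D\psi=0$ with $\psi\to\psi_0$ constant at the chosen end gives
\[
\int_M|\widehat\nabla\psi|^2+\langle\psi,(\mu+J\cdot e_0)\psi\rangle=\langle\psi_0,(\mathcal E+P\cdot e_0e_i+Q_E\,\cdot+Q_B\,\cdot)\psi_0\rangle.
\]
When $\mathcal E^2=|P|^2+Q_E^2+Q_B^2$, the Hermitian endomorphism on the right has a kernel of positive dimension (complex dimension half the rank). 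For every $\psi_0$ in that kernel we obtain $\widehat\nabla\psi=0$ and $\langle\psi,(\mu+Je_0)\psi\rangle=0$. Admissibility (Definition \ref{admissible}) will be used to handle the remaining ends and any horizon boundary, which should carry MOTS-type boundary conditions that make the boundary terms vanish.

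\textbf{Step 2: from spinors to spacetime data.} From a $\widehat\nabla$-parallel $\psi$, form the lapse $N=|\psi|^2$, the shift $X_i=\langle e_ie_0\psi,\psi\rangle$, and the scalar bilinears $\Phi_1=\langle\psi,\omega\psi\rangle$, $\Phi_2=\langle\psi,\omega e_0\psi\rangle$, where $\omega$ is the complex volume element. The parallel equation gives the Killing initial data equations
\[
\nabla_iN=k_{ij}X^j+(\text{EM terms}),\qquad \nabla_{(i}X_{j)}=Nk_{ij}\ \ldots,
\]
together with $\nabla\Phi_1=NE+\dots$ and $\nabla\Phi_2=NB+\dots$. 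The algebraic identity $N^2=|X|^2+\Phi_1^2+\Phi_2^2$, valid for Clifford bilinears of this type, shows that the Killing vector $V=N\partial_t+X$ is causal. The electromagnetic potential is determined by $\Phi$.

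\textbf{Step 3: Killing development and identification.} Build the Killing development $(M\times\R,\ -N^2dt^2+g_{ij}(dx^i+X^idt)(dx^j+X^jdt))$ with the Maxwell field assembled from $E,B$. Step 2 makes $\partial_t$ a Killing field preserving $F$, and $M$ embeds as the slice $t=0$ with second fundamental form $k$. Since the development carries a nonzero Killing spinor, it is a supersymmetric solution of Einstein--Maxwell. Tod's classification of such solutions, in the case of a timelike Killing vector, gives the Israel--Wilson--Perjés form $-|\Phi|^{-2}(dt+\omega)^2+|\Phi|^2h$ with $h$ flat and $\Phi$ complex harmonic. The crucial reduction is to show $\omega=0$ and that $\Phi$ has constant phase, with $\Phi=1+\sum m_i/|x-p_i|$. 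This should follow from asymptotic flatness of $h$, from completeness of $M$, and from the ends of $M$ corresponding to the poles $p_i$. Finally, use $\mu=|J|$ together with the fact that the energy term vanishes to rule out charged matter.

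\textbf{Main obstacle.} I expect the hardest part to be ruling out the null case, where $N=|X|$ on an open set, which Tod's classification treats as plane waves. This is the same difficulty as in the uncharged PMT rigidity. I would first try to show that $N^2-|X|^2=\Phi_1^2+\Phi_2^2$ is strictly positive whenever $Q_E^2+Q_B^2>0$, using analytic continuation of the harmonic $\Phi$. If instead $Q_E=Q_B=0$, the equality forces $\mathcal E=|P|$, and I would invoke the known result that this yields Minkowski. That is the trivial Majumdar--Papapetrou case.
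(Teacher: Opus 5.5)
Your Steps 1--3 follow the paper's overall architecture: the Witten spinor for the charged connection, the bilinears $N$, $X$, $L$, $K$ with $N^2-|X|^2=K^2+L^2$, Tod's IWP form, and then the Chru\'sciel--Reall--Tod constant-phase argument. However, there is a genuine gap exactly where the paper does its main work.

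\textbf{The obstacle is the mixed case, not the null case.} The everywhere-null case is not the issue. When $Q_E^2+Q_B^2>0$ (equivalently $\mathcal E>|P|$), one has $v^2=N^2-|X|^2\to N_\infty^2-|X_\infty|^2\neq0$ at infinity, so $\psi$ is timelike near the end. The real difficulty is the mixed case. There $Z_M=\{v=0\}$ is a nonempty compact set with possibly empty interior, for instance a closed surface. This is what a slice of extremal Reissner--Nordstr\"om crossing the degenerate horizon would look like locally, since $\partial_t$ is null exactly on the crossing sphere.

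\textbf{Why analytic continuation fails.} Your proposed remedy is to prove $K^2+L^2>0$ by ``analytic continuation of the harmonic $\Phi$''. This does not work as stated, for four reasons.
\begin{enumerate}
\item It is not $L+iK$ that is harmonic but $H=(L+iK)/v^2=1/(L-iK)$, so $|H|=v^{-1}$.
\item $H$ is harmonic only in the flat coordinates provided by the Tod map $T$, with $dT=(Y,\widetilde Y,\widehat Y)$. This map degenerates precisely on $Z_M$: its singular values are $v,v,N$, so it has rank one there, and $H$ blows up there.
\item In the target, a null surface therefore looks like an isolated pole of $H$. Locally this is indistinguishable from the genuine Majumdar--Papapetrou poles $p_i$, which are the images of the cylindrical ends, where also $v\to0$.
\item Since $g,k,E,B$ are only smooth, there is no analyticity on $M$ to exploit either.
\end{enumerate}

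\textbf{The missing global idea.} What distinguishes a null surface from a cylindrical end is global, and your plan does not contain this idea. The paper's argument runs as follows.
\begin{enumerate}
\item $T$ is defined globally. This uses that $M$ has exactly one asymptotically flat end and is simply connected.
\item $T$ has Jacobian $Nv^2\ge0$.
\item $T$ has degree one, computed at the asymptotically flat end, with each cylindrical end collapsing to a point $x_\ell$.
\item Near a point of $Z_M$, either $T$ branches with local index at least $2$, or $Z_M$ is locally a fiber surface $\Sigma$ of $T$. The paper excludes branching against degree one, via a quasiregularity and Stoilow argument on level sets.
\item Each of the two half-collars of $\Sigma$ then has local degree
\[
\deg(\widehat{\mathbf a})=\frac{1}{4\pi}\int_\Sigma|\partial_\nu\mathcal V|^2N^{-2}\,dA_\Sigma\ge0
\]
over a punctured neighborhood of $T(\Sigma)$.
\item Hence either the total degree is at least $2$, a contradiction, or $v\equiv0$ near $\Sigma$. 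In the latter case $Z_M$ is open and closed, hence empty.
\end{enumerate}

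\textbf{Consequences for your Step 3.} The same degree-one plus positive-Jacobian statement is what gives injectivity of $T$. Your Step 3 needs this to pass from Tod's local IWP form to a global $\mathbb R^3\setminus\{p_i\}$ with $U=1+\sum m_i/|x-p_i|$. ``Asymptotic flatness of $h$ and completeness'' alone do not provide it.

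\textbf{Smaller points.} The data have cylindrical ends, not horizon boundaries. So Step 1 needs Fredholm theory on cylinders, which uses $\mathfrak e_\ell^2+\mathfrak b_\ell^2=m_\ell^{-2}$. Also, in the case $\mathcal E=|P|$ you must first rule out cylindrical ends before invoking uncharged rigidity.
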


We now describe the central idea of the proof. Equality in Theorem
\ref{PMT} yields a nontrivial super-covariantly constant spinor $\psi$,
satisfying
\begin{equation}
\label{eq:supercov}
    \nabla_i\psi
    =
    \left(
        -\frac12 k_{ij}e_je_0
        +\frac12 E e_i e_0
        -\frac12 B e_i\chi
    \right)\psi.
\end{equation}
We define the function $N=|\psi|^2$, the vector field $X_i=\langle e_ie_0\psi,\psi\rangle$, and note that $N\ge|X|$.
If $\psi$ is null (in the sense that $N=|X|$ holds everywhere) or timelike (in the sense that $N>|X|$ everywhere), the results follow from previous papers \cite{Tod, CRT, HirschZhang} and it remains to understand the mixed causal type, also compare \cite{HirschZhangAdS, HirschHuang} where similar difficulties were encountered.

\medskip

The key observation is that a super-covariantly constant spinor naturally
produces three distinguished one-forms. Let
\begin{align*}
    Y_i=& \operatorname{Im}\langle e_i \chi\psi,\psi\rangle,\qquad
    \widetilde Y_i= \operatorname{Im}\langle e_i \mathcal J\chi\psi,\psi\rangle,\qquad
    \widehat Y_i= \operatorname{Re}\langle e_i \mathcal J\chi\psi,\psi\rangle,
    \end{align*}
where $\mathcal J$ denotes the quaternionic action on spinors. A direct
consequence of \eqref{eq:supercov} is that the associated one-forms are closed.

\medskip

A separate topological argument shows that $M$ is simply connected.
Consequently these three one-forms admit global potentials and define a map
\[
    T=(T^1,T^2,T^3)\colon M\longrightarrow\mathbb R^3
\]
satisfying
\[
    dT^1=Y^\flat,\qquad
    dT^2=\widetilde Y^\flat,\qquad
    dT^3=\widehat Y^\flat.
\]
The map $T$ has first appeared in Tod's timelike rigidity proof \cite{Tod} and we refer to it as \emph{Tod map} associated with $\psi$.

\medskip

The decisive point is that the spinorial algebra translates into strong
global restrictions on this map. The Jacobian of $T$ is nonnegative, while the asymptotic behavior implies
that $T$ has degree one. 
On the other hand, if the causal type of $\psi$
changes, the degeneracy of the spinorial bilinears forces the Tod map to
develop nontrivial branching to ensure that the local mapping
degree is at least two. 
Consequently $\psi$ cannot change causal type.

\medskip

\noindent
\textbf{AI usage:} We made substantial use of ChatGPT 5.6 Pro and Theorem \ref{T:main} was obtained without significant assistance from the authors. In particular, the central argument regarding the degree of the Tod map was suggested by AI. 

\medskip

\noindent
\textbf{Acknowledgements:} We thank Piotr Chrusciel and Thomas K\"orber for many helpful discussions. YZ was partially supported by NSFC Grant No.\ 12501070 and the startup fund from BIMSA.

\section{Preliminaries}

\begin{definition}\label{Def:MP}
A Majumdar--Papapetrou (MP) spacetime \cite{Majumdar, Papapetrou} $(\overline M,\overline g)$ is defined by $\overline M=\mathbb{R}\times (\mathbb R^3\setminus\bigcup_{i=1}^N \{p_i\})$ and 
$$
\overline g =-U^{-2}dt^2+U^2\delta
$$
where $p_i$ are points in $\mathbb R^3$, $\delta$ is the Euclidean metric and
$U:M^3\to (0,\infty)$ is a positive harmonic function given by
\[
U=1+\sum_{i=1}^{N}\frac{m_i}{|x-p_i|},
\]
for some $m_i> 0$.
\end{definition}

Note that $(\overline M,\overline g)$ is a solution to the Einstein--Maxwell equations with $F=d(U^{-1})\wedge dt$.

\begin{definition}[Weighted function spaces]
\label{Def:weighted}
Let $B\subset\mathbb R^n$ be a ball containing the origin. For $s\in\mathbb N_0$,
$\alpha\in(0,1)$, $p\in[1,\infty)$, and $q\in\mathbb R$, define
\begin{align*}
\|f\|_{C^{s,\alpha}_{-q}(\mathbb R^n\setminus B)}
:={}&
\sum_{|I|\leq s}
\sup_{x\in\mathbb R^n\setminus B}
|x|^{q+|I|}\,|\nabla_I f(x)|
\\
&+
\sum_{|I|=s}
\sup_{\substack{x,y\in\mathbb R^n\setminus B\\
0<|x-y|\leq |x|/2}}
|x|^{q+s+\alpha}
\frac{|\nabla_I f(x)-\nabla_I f(y)|}{|x-y|^\alpha}.
\end{align*}
The corresponding weighted Sobolev norm is
\[
\|f\|_{W^{s,p}_{-q}(\mathbb R^n\setminus B)}
:=
\left(
\sum_{|I|\leq s}
\int_{\mathbb R^n\setminus B}
|x|^{p(q+|I|)-n}
|\nabla_I f|^p\,dx
\right)^{1/p}.
\]
For tensor and spinor fields, all derivatives and pointwise norms are
taken with respect to the Euclidean connection and metric.
\end{definition}
\begin{definition} \label{admissible}
Let $(M^3,g,k,E,B)$ be a smooth, connected, oriented, and complete
initial data set with electric field $E$ and magnetic field $B$. We call
$(M^3,g,k,E,B)$ \emph{admissible} if the following conditions hold:
\begin{enumerate}
    \item There exists a compact subset $K\subset M$ such that
    \begin{equation*}
        M\setminus K
        =\bigsqcup_{a=1}^{n_1} \mathcal{U}_a
        \sqcup
        \bigsqcup_{\ell=1}^{n_2} \mathcal{C}_\ell, \qquad n_1\ge1,
    \end{equation*}     
    where each $\mathcal{U}_a$ is an asymptotically flat end and each
    $\mathcal{C}_\ell$ is a cylindrical end.

    \item For each asymptotically flat end $\mathcal{U}_a$, there exists a
    diffeomorphism
    \[
        \varphi_a:\mathcal{U}_a
        \longrightarrow
        \mathbb{R}^3\setminus\overline{B}
    \]
    such that for some fixed constant $q\in(\frac{1}{2},1)$ and $\alpha\in(0,1)$,
    \[
        g-\delta\in C^{2,\alpha}_{-q},
        \qquad
        k,\;E,\;B\in C^{1,\alpha}_{-q-1},
    \]
where we omit the map $\varphi_a$ in the expressions.
    \item For each cylindrical end $\mathcal{C}_\ell$, there exist a
    diffeomorphism
    \[
        \phi_\ell:\mathcal{C}_\ell
        \longrightarrow (0,\infty)\times S^2,
    \]
   and constants $m_\ell>0$,
    $\mathfrak{e}_\ell,\mathfrak{b}_\ell\in\mathbb{R}$ such that, with
    \[
        g_{\ell}:=ds^2+ m_{\ell}^2g_{S^2},
    \]
    where $g_{S^2}$ is the round metric on $S^2$,
    one has
    \[
    \begin{split}
        &\bigl\|(\phi_\ell)_*g-g_{\ell}\bigr\|_
            {C^{2,\alpha}([s,\infty)\times S^2)}
        +\bigl\|(\phi_\ell)_*k\bigr\|_
            {C^{1,\alpha}([s,\infty)\times S^2)}
        \\
        &\quad
        +\bigl\|(\phi_\ell)_*E-\mathfrak{e}_\ell\partial_s\bigr\|_
            {C^{1,\alpha}([s,\infty)\times S^2)}
        +\bigl\|(\phi_\ell)_*B-\mathfrak{b}_\ell\partial_s\bigr\|_
            {C^{1,\alpha}([s,\infty)\times S^2)}
        \longrightarrow 0
    \end{split}
    \]
    as $s\to\infty$.

 \item The energy density $\mu$ and momentum density $J$, defined by
    \[
        \mu
        :=
        \frac{1}{2}
        \left(
            R_g-|k|_g^2+(\operatorname{tr}_g k)^2
        \right)
        -|E|_g^2-|B|_g^2
    \]
    and
    \[
        J
        :=
        \operatorname{div}_g
        \bigl(k-(\operatorname{tr}_g k)g\bigr)
        +2E\times_g B,
    \]
    belong to $L^1(M,g)$.
    
    \item The electric and magnetic fields are divergence-free\footnote{The PMT inequality still holds without this assumption by absorbing the divergence terms into the energy density. However, the rigidity Theorem \ref{T:main} becomes incorrect in this more general setting.}:
    \[
        \operatorname{div}_g E
        =
        \operatorname{div}_g B
        =
        0.
    \]
\end{enumerate}
\end{definition}
\begin{remark} \label{ebm}
Since $\mu\to m_\ell^{-2}-\mathfrak{e}_\ell^2-\mathfrak{b}_\ell^2$ along each cylinrical end,    $\mu\in L^1(M)$ implies that $\mathfrak{e}_\ell^2+\mathfrak{b}_\ell^2= m_\ell^{-2}$. 
\end{remark}
\begin{definition}
  The ADM energy $\mathcal{E}$ and the ADM linear momentum  $P$ are given by 
  \begin{equation*}
\mathcal{E}:=\frac{1}{16\pi}\lim_{r\to \infty}\int_{S_r}(g_{ij,i}-g_{ii,j})\nu^j dA_g,\qquad
 {P}_i:= \frac{1}{8\pi}\lim_{r\to \infty}\int_{S_r}(k_{ij}-(\operatorname{tr}_g k)g_{ij})\nu^j dA_g,
    \end{equation*}
    where $S_r:=\{|x|=r\}$  and $\nu$ is the outer unit normal to $S_r$.
The total
electric and magnetic charges of the end are defined by
\[
Q_E
 :=\frac{1}{4\pi}\lim_{r\to\infty}
 \int_{S_r}\langle E,\nu\rangle \,dA_g,
\qquad
Q_B
 :=\frac{1}{4\pi}\lim_{r\to\infty}
 \int_{S_r}\langle B,\nu\rangle \,dA_g.
\]
\end{definition}

Let $\mathcal{S}$ be the spinor bundle on $(M^3,g)$. Let $\overline{\mathcal S}=\mathcal{S}\oplus\mathcal{S}$ be the spacetime
spinor bundle.  We use
\[
 e_i e_j+e_j e_i=-2\delta_{ij},\qquad e_0^2=1,
 \qquad e_0e_i=-e_ie_0,\qquad \chi=e_1e_2e_3.
\]
 Let $\mathcal D:=e_i\nabla_i$ and set
\begin{equation*}
 \widehat\nabla_i
 :=\nabla_i+\frac12k_{ij}e_je_0
       -\frac12Ee_ie_0+\frac12Be_i\chi,
 \qquad
 \widehat{\mathcal D}:=e_i\widehat\nabla_i.
\end{equation*}
For $-q<\delta<-\frac12$, let $\mathcal X_\delta$ denote the hybrid
space which equals $W^{1,2}_\delta$ on each asymptotically flat end
and $H^1$ on each cylindrical end.  Similarly, let
$\mathcal Y_{\delta-1}$ equal $L^2_{\delta-1}$ on the asymptotically
flat ends and $L^2$ on the cylindrical ends.
\begin{proposition} \label{existence}
    Let $(M,g,k,E,B)$ be an admissible initial data set satisfying the charged dominant energy condition.  Let $\psi_\infty$ be an
 spinor equal to a unit constant spinor at infinity on the designated asymptotically flat end
and equal to zero near every other end.  Then there exists a spinor $\psi$ satisfying 
    \[\widehat{\mathcal{D}}\psi=0,\quad \psi-\psi_\infty\in \mathcal{X}_\delta.\]
\end{proposition}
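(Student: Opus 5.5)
We follow the standard Witten-type argument: we write $\psi=\psi_\infty+\phi$ with $\phi\in\mathcal X_\delta$ and solve $\widehat{\mathcal D}\phi=-\widehat{\mathcal D}\psi_\infty$. Since $\psi_\infty$ is constant near infinity on the designated end, and $g-\delta\in C^{2,\alpha}_{-q}$ while $k,E,B\in C^{1,\alpha}_{-q-1}$, we have $\widehat{\mathcal D}\psi_\infty=O(|x|^{-q-1})$. Because $q>\frac12$ and $\delta<-\frac12$, this lies in $L^2_{\delta-1}$. On the cylindrical ends and the other asymptotically flat ends $\psi_\infty$ vanishes, so the right-hand side lies in $\mathcal Y_{\delta-1}$. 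It therefore suffices to show that $\widehat{\mathcal D}:\mathcal X_\delta\to\mathcal Y_{\delta-1}$ is surjective, or at least that the right-hand side is in its range.

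\textbf{Step 1: Lichnerowicz--Weitzenböck formula.} We establish the modified Lichnerowicz formula
\[
\widehat{\mathcal D}^*\widehat{\mathcal D}=\widehat\nabla^*\widehat\nabla+\tfrac12(\mu+J_ie_ie_0)
\]
(up to the paper's conventions), where the divergence-free conditions on $E$ and $B$ are used to cancel first-order terms. The charged dominant energy condition $\mu\ge|J|$ then makes the zeroth-order term nonnegative. Integrating by parts for compactly supported $\phi$ yields
\[
\|\widehat{\mathcal D}\phi\|_{L^2}^2\ge\|\widehat\nabla\phi\|_{L^2}^2 .
\]

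\textbf{Step 2: Coercivity.} We upgrade this to $\|\phi\|_{\mathcal X_\delta}\le C\|\widehat{\mathcal D}\phi\|_{L^2}$.
\begin{itemize}
\item On the asymptotically flat ends we use the weighted Hardy inequality $\|\phi\|_{L^2_{-1}}\lesssim\|\nabla\phi\|_{L^2}$. The perturbation terms in $\widehat\nabla$ decay like $|x|^{-q-1}$ and are absorbed outside a large compact set.
\item On the cylindrical ends the natural space is $H^1$, and there Hardy fails. Instead we use the structure of the limiting operator: on $(0,\infty)\times S^2$ with radius $m_\ell$ and $\mathfrak e_\ell^2+\mathfrak b_\ell^2=m_\ell^{-2}$ (Remark \ref{ebm}), the operator $\widehat\nabla$ should have no $L^2$-kernel and a spectral gap for parallel sections. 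In the model this is the Bertotti--Robinson throat, which carries Killing spinors that are bounded but not $L^2$. This step is where I expect the main difficulty: the model $\widehat{\mathcal D}$ on the cylinder may have $0$ in its continuous spectrum, so $L^2$ coercivity can fail.
\end{itemize}

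To handle this, I would first try to avoid a global coercivity estimate and argue by approximation instead. Solve on an exhaustion, for instance with cylindrical ends truncated at $s=L$ and with suitable local (chiral/MIT-type) boundary conditions for which the boundary term in the Weitzenböck formula has a favorable sign. This gives uniform bounds on $\|\widehat\nabla\phi_L\|_{L^2}$ from the energy identity, together with weighted $L^2$ bounds on the asymptotically flat ends. One then extracts a limit, using local elliptic estimates for $L^2_{\rm loc}$ control and a Poincaré-type inequality along each cylinder relative to the finite-dimensional space of model parallel spinors to control $\|\phi_L\|_{L^2}$ on $\mathcal C_\ell$. Alternatively, and more simply, one may show that $\widehat{\mathcal D}$ is Fredholm between these hybrid spaces, since the model cylindrical operator is translation invariant, with invertibility holding for generic weights. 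Then by the Lax--Milgram/Riesz argument applied to $\widehat{\mathcal D}\widehat{\mathcal D}^*$, the range is closed, and it is orthogonal to the kernel of the adjoint. That kernel consists of harmonic spinors with $\widehat\nabla\xi=0$ decaying at the asymptotically flat end, which forces $|\xi|\to0$ and hence $\xi=0$ by parallel transport. This gives the desired $\phi$, and so $\psi$.
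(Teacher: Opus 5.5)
Your reduction to $\widehat{\mathcal D}\phi=-\widehat{\mathcal D}\psi_\infty$ and your Weitzenb\"ock/injectivity step agree with the paper. However, the two steps that actually produce $\phi$ are not established.

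The first gap is on the cylindrical ends. You misread the model there, so your Fredholm claim is unsupported at the one weight that matters. $\mathcal X_\delta$ is unweighted $H^1$ on each $\mathcal C_\ell$, so invertibility for generic weights is beside the point. Moreover, if $0$ were in the continuous spectrum, as you suggest, $\widehat{\mathcal D}:\mathcal X_\delta\to\mathcal Y_{\delta-1}$ would not even have closed range. What you must check is that the limiting cross-sectional operator $A_{S^2}+\bigl(-\frac{\mathfrak e_\ell}{2}e_0+\frac{\mathfrak b_\ell}{2}\chi\bigr)$ is invertible, and it is. The zeroth-order term is self-adjoint and squares to $\frac14(\mathfrak e_\ell^2+\mathfrak b_\ell^2)=\frac{1}{4m_\ell^2}$ by Remark~\ref{ebm}. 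The Dirac operator of the round sphere of radius $m_\ell$ has gap $m_\ell^{-1}$. Hence the spectrum stays at distance at least $\frac{1}{2m_\ell}$ from $0$, and Lockhart--McOwen applies. Correspondingly, the throat Killing spinors are not bounded non-$L^2$ modes; they grow or decay like $e^{\pm s/(2m_\ell)}$, cf.\ the equation $\partial_s\psi_\pm\pm\mathfrak q\psi_\pm\approx0$ in Lemma~\ref{lem:cylindrical}.

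The more serious gap is the cokernel. $\widehat{\mathcal D}$ is not formally self-adjoint: $Ee_0$ is self-adjoint but $B\chi$ is skew-adjoint. So $\widehat{\mathcal D}^*=\overline{\mathcal D}-\frac12Ee_0-\frac12B\chi$ is the Dirac--Witten operator of the data $(k,E,-B)$. Its Weitzenb\"ock formula involves the momentum density $\operatorname{div}_g(k-(\operatorname{tr}_gk)g)-2E\times_gB$, which $\mu\ge|J|$ does not control. Therefore nothing forces elements of $\ker\widehat{\mathcal D}^*$ to be $\widehat\nabla$-parallel, and your claim that the adjoint kernel consists of such spinors is unjustified. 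The same obstruction blocks Lax--Milgram for $\widehat{\mathcal D}\widehat{\mathcal D}^*$, which needs coercivity of $\widehat{\mathcal D}^*$, and it reappears as the adjoint boundary problem in your exhaustion scheme. The missing idea is the index argument, which runs as follows:
\begin{enumerate}
\item Set $\mathcal D_t=\overline{\mathcal D}+tV$ with $V=-\frac12Ee_0+\frac12B\chi$, $0\le t\le1$. This is a continuous Fredholm family, because the cross-sectional gap is at least $m_\ell^{-1}-\frac t2(\mathfrak e_\ell^2+\mathfrak b_\ell^2)^{1/2}\ge\frac{1}{2m_\ell}$ uniformly in $t$.
\item The endpoint $\overline{\mathcal D}=\mathcal D-\frac12(\operatorname{tr}_gk)e_0$ is self-adjoint with trivial kernel, since the charged dominant energy condition implies the ordinary one.
\item Hence $\operatorname{ind}\widehat{\mathcal D}=0$, and your injectivity upgrades to an isomorphism.
\end{enumerate}
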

\begin{proof}
 For $\phi\in C^2_c$, a direct computation yields
\begin{equation}\label{eq:compactcoercivity}
 \|\widehat{\mathcal D}\phi\|_2^2
 =\|\widehat\nabla\phi\|_2^2
 +\frac12\int_M\operatorname{Re}\langle
   (\mu+J^ie_ie_0)\phi,\phi\rangle\,d\mu_g
 \geq\|\widehat\nabla\phi\|_2^2.
\end{equation}
Thus,  $\widehat{\mathcal{D}}:\mathcal{X}_\delta\to \mathcal{Y}_{\delta-1}$ is injective. However, $\widehat{\mathcal{D}}$ is not self-adjoint, then we need to deform $\widehat{\mathcal{D}}$ to a self-adjoint operator. Let 
\[
 V:=
    -\frac12Ee_0+\frac12B\chi,
 \quad
 \overline{\mathcal{D}}=\mathcal{D}-\frac{1}{2}(\operatorname{tr}_g k)e_0
 \quad
 \mathcal D_t:=\overline{\mathcal D}+tV,\qquad 0\leq t\leq1.
\]
Thus $\mathcal D_0=\overline{\mathcal D}$ and
$\mathcal D_1=\widehat{\mathcal D}$. The operators
$
 \mathcal D_t:\mathcal X_\delta\longrightarrow
 \mathcal Y_{\delta-1}
$
form a continuous Fredholm family. On each asymptotically flat
end this follows from the decay of the coefficients. On a cylindrical
end, the limiting tangential operator is
\[
 A_{\ell,t}
 =
 A_{S^2}
 +t\left(
 -\frac{\mathfrak e_\ell}{2}e_0
 +\frac{\mathfrak b_\ell}{2}\chi
 \right),
\]
where $A_{S^2}$ is the Dirac operator of the spherical cross-section of the cylindrical end.
 Since the spectral gap
of $A_{S^2}$ is $m_\ell^{-1}$ and by Remark \ref{ebm}, we have
\[
 \operatorname{dist}
 \bigl(0,\operatorname{spec}A_{\ell,t}\bigr)
 \geq
 m_{\ell}^{-1}-\frac{t}{2}
 \sqrt{\mathfrak e_\ell^2+\mathfrak b_\ell^2}
 \geq\frac{1}{2m_{\ell}}.
\]
The Fredholm assertion follows by applying
\cite[Theorem~6.2]{LockhartMcOwen} on each asymptotically cylindrical
end and \cite[Theorem~9.1]{LockhartMcOwen} on each asymptotically
flat end, and then patching the resulting end estimates with an
interior elliptic estimate.
Note that the charged dominant energy condition implies the standard dominant energy condition: 
\[\frac{1}{2}
        \left(
            R_g-|k|_g^2+(\operatorname{tr}_g k)^2
        \right)\ge |\operatorname{div}_g(k-(\operatorname{tr}_gk)g)|_g.\]
        Hence, $\operatorname{ker} \overline{\mathcal{D}}=0$. Since $\overline{\mathcal{D}}$ is self-adjoint, we have $\operatorname{ind}\overline{\mathcal{D}}=0$, 
        therefore, $\operatorname{ind}\widehat{\mathcal{D}}=0$. 
        Combining with the injectivity of $\widehat{D}$,  $\operatorname{coker}\widehat{\mathcal{D}}=0$,i.e., $\widehat{\mathcal{D}}$ is an isomorphism. Since $\widehat{\mathcal{D}}\psi_\infty\in \mathcal{Y}_{\delta-1}$, there exists a spinor $\phi\in \mathcal{X}_\delta$ such that 
        $\widehat{\mathcal{D}}\phi=-\widehat{\mathcal{D}}\psi_\infty$.
\end{proof}
Next, we recall the integral formula for the positive mass theorem with electromagnetic fields; see \cite{BartnikChrusciel,GHHP,Witten}.
\begin{proposition}
For the spinor $\psi$ of Proposition~\ref{existence},
\begin{align}
 &\int_M\left(
 |\widehat\nabla\psi|^2-|\widehat{\mathcal D}\psi|^2
 +\frac12\operatorname{Re}\langle
    (\mu+J^ie_ie_0)\psi,\psi\rangle\right)d\mu_g \notag\\
 &\quad=4\pi\left\langle
 \bigl(\mathcal E+Pe_0+Q_Ee_0-Q_B\chi\bigr)
 \psi_\infty,\psi_\infty\right\rangle. \label{eq:globalwitten}
\end{align}
In particular, when $\mathcal{E}^2=|P|^2+Q_E^2+Q_B^2$, there exists $\psi_\infty$ such that 
\[\psi-\psi_\infty\in \mathcal{X}_\delta\quad \text{and}\quad \widehat{\nabla}\psi=0.\]
\end{proposition}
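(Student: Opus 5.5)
The proof is Witten's argument with charge terms: a Lichnerowicz--Weitzenböck identity for the modified connection $\widehat\nabla$, integrated by parts and evaluated at infinity.

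\textbf{Step 1: pointwise identity.} Write $\widehat{\mathcal D}=e_i\widehat\nabla_i$ and compute
\[
|\widehat\nabla\psi|^2-|\widehat{\mathcal D}\psi|^2=\operatorname{div}\Theta-\tfrac12\operatorname{Re}\langle(\mu+J^ie_ie_0)\psi,\psi\rangle,
\qquad
\Theta_i=\operatorname{Re}\langle \widehat\nabla_i\psi+e_ie_j\widehat\nabla_j\psi,\psi\rangle .
\]
The $k$-terms are the standard Parker--Taubes computation. The $E,B$ terms contribute two kinds of quantities.
\begin{itemize}[leftmargin=*]
\item Quadratic terms combine into $-|E|^2-|B|^2$ in $\mu$ and $2E\times B$ in $J$; this uses $\chi$ central with $\chi^2=\pm1$ and $e_0$ anticommuting with $e_i$.
\item First-order terms are proportional to $\operatorname{div}E$ and $\operatorname{div}B$, which vanish by admissibility condition (5).
\end{itemize}
This is essentially the compact-support computation \eqref{eq:compactcoercivity}. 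I would redo it keeping the boundary term.

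\textbf{Step 2: integration.} Integrate over the region inside large spheres $S_r$ on the designated asymptotically flat end, truncated at large $s$ on the other ends.
\begin{itemize}[leftmargin=*]
\item \emph{Other ends.} Since $\psi_\infty=0$ there, $\psi\in W^{1,2}_\delta$ or $H^1$. So $\psi\in L^2$, $\widehat\nabla\psi\in L^2$, and the flux of $\Theta$ along a sequence $r_j,s_j\to\infty$ tends to zero by a standard averaging (coarea) argument.
\item \emph{Designated end.} Write $\psi=\psi_\infty+\phi$ with $\phi\in W^{1,2}_\delta$ and $\delta<-\tfrac12$. The boundary integral converges to the ADM expression: the metric part gives $4\pi\mathcal E|\psi_\infty|^2$ and the $k$ part gives $4\pi\langle Pe_0\psi_\infty,\psi_\infty\rangle$, as in Witten and Parker--Taubes. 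The $E$ and $B$ terms give
\[
\operatorname{Re}\langle(-\tfrac12 E e_i e_0+\dots)\rangle \;\to\; \langle E,\nu\rangle\langle e_0\psi_\infty,\psi_\infty\rangle,
\]
and similarly for $B$ with $\chi$. This produces $Q_E e_0$ and $-Q_B\chi$ after the sign bookkeeping.
\item \emph{Cross terms.} Terms involving $\phi$ vanish in the limit because of the decay of $g-\delta$, $k$, $E$, $B$ in $C_{-q-1}$, $q>\tfrac12$, together with $\phi\in W^{1,2}_\delta$. Again one takes limits along good radii.
\item \emph{Integrability.} All integrands on the left of \eqref{eq:globalwitten} are integrable because $\mu,J\in L^1$ and $\psi$ is bounded.
\end{itemize}
This yields \eqref{eq:globalwitten}.

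\textbf{Step 3: equality case.} Since $\widehat{\mathcal D}\psi=0$ and $\mu+J^ie_ie_0\ge \mu-|J|\ge0$ as an operator, the left side is at least $\|\widehat\nabla\psi\|_2^2$. The operator $\mathcal E+Pe_0+Q_Ee_0-Q_B\chi$ is Hermitian on constant spinors. Its eigenvalues are $\mathcal E\pm\sqrt{|P|^2+Q_E^2+Q_B^2}$: the three terms pairwise anticommute and each squares to the identity, up to normalization of the sign of $\chi^2$. When $\mathcal E^2=|P|^2+Q_E^2+Q_B^2$ the lower eigenvalue is $0$. Choose $\psi_\infty$ a unit eigenvector for it, apply Proposition~\ref{existence}, and conclude $\widehat\nabla\psi=0$. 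The same eigenvalue computation reproves Theorem~\ref{PMT}.

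\textbf{Main obstacle.} I expect the hardest part to be the sign and algebra bookkeeping for the charge terms: getting exactly $Q_Ee_0-Q_B\chi$ at infinity and showing these anticommute with $Pe_0$ in the required way. In particular $e_0$ and $\chi$ must anticommute on $\overline{\mathcal S}$. A secondary point is justifying the vanishing of the boundary terms on the cylindrical ends with only $H^1$ control, which I would handle by choosing good slices.
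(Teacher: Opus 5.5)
Your outline is the standard Witten/Bartnik--Chru\'sciel argument, and that is all the paper offers here: the formula is recalled with citations and no proof. Steps~1 and~3 check out.
\begin{itemize}
\item The zeroth-order part of $\Theta_i$ is the self-adjoint operator $\tfrac12(k_{ij}-(\operatorname{tr}_gk)g_{ij})e_je_0+E_ie_0-B_i\chi$. Its flux produces exactly $4\pi\langle(Pe_0+Q_Ee_0-Q_B\chi)\psi_\infty,\psi_\infty\rangle$.
\item The quadratic terms give exactly $\tfrac12(\mu+J^ie_ie_0)$.
\item $\chi^2=1$, and $e_0$, $\chi$, $P^ie_ie_0$ pairwise anticommute.
\end{itemize}
On the non-designated ends every flux term is quadratic in $\psi$, so your good-radius/good-slice argument is fine there.

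\textbf{The gap} is in your ``cross terms'' bullet on the designated end. With $\psi=\psi_\infty+\phi$, the flux contains
\[
\int_{S_r}\operatorname{Re}\bigl\langle\nabla_\nu\phi+\nu\cdot\mathcal D\phi,\;\psi_\infty\bigr\rangle\,dA_g .
\]
This term is linear in $\phi$ and carries no decaying coefficient, so decay alone does not kill it. The bound $\phi\in W^{1,2}_\delta$ gives only $o(r^{1+\delta})$ along good radii. Even a pointwise bound $|\nabla\phi|=O(r^{\delta-1})$ from elliptic regularity gives only $O(r^{1+\delta})$. Since $1+\delta>1-q>0$, neither tends to zero.

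\textbf{The missing idea is structural.} In the asymptotic frame, $\nabla_i+e_ie_j\nabla_j=(\delta_{ij}+e_ie_j)\partial_j+O(r^{-1-q})$, and $\delta_{ij}+e_ie_j=\tfrac12(e_ie_j-e_je_i)$ is antisymmetric. So for constant $\psi_\infty$,
\[
\operatorname{Re}\langle(\delta_{ij}+e_ie_j)\partial_j\phi,\psi_\infty\rangle
=\partial_j\Bigl(\tfrac12\operatorname{Re}\langle(e_ie_j-e_je_i)\phi,\psi_\infty\rangle\Bigr).
\]
This is the divergence of an antisymmetric tensor, i.e.\ a curl in dimension three, so its Euclidean flux through the closed surface $S_r$ vanishes identically. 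The remaining errors (metric versus Euclidean normal, area and frame, plus connection terms) are bounded by $C(r^{-q}|\nabla\phi|+r^{-1-q}|\phi|)$. Their flux is $o(r^{1+\delta-q})\to0$ along good radii, because $\delta<-\tfrac12<q-1$. Equivalently, you can use the symmetry of the real Lichnerowicz bilinear form to move the derivative from $\phi$ onto $\psi_\infty$. With this, the rest of your argument goes through.
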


\begin{proposition} \label{X neq 0}
Suppose $\mathcal{E}^2=|P|^2+Q_E^2+Q_B^2$.
\begin{enumerate}
    \item If $\mathcal{E}=|P|$, then $E=B=0$ and $(M,g,k)$ embeds into Minkowski spacetime. 
    \item  If $\mathcal{E}>|P|$, then $N^2_\infty-|X_\infty|^2$ is a nonzero constant where 
    \[N_\infty=|\psi_\infty|^2,\quad (X_\infty)_i=\langle e_ie_0\psi_\infty,\psi_\infty\rangle.
    \]
\end{enumerate}
\end{proposition}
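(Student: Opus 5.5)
Both parts rest on the boundary term of \eqref{eq:globalwitten}. In the equality case $\mathcal E^2=|P|^2+Q_E^2+Q_B^2$ the left-hand side is nonnegative and vanishes for the minimizing $\psi_\infty$, so the Hermitian endomorphism
\[
\mathcal A:=\mathcal E+Pe_0+Q_Ee_0-Q_B\chi
\]
acting on constant spinors must be positive semidefinite with a nontrivial kernel. The plan is to compute its spectrum and identify the kernel. The operators $P^ie_ie_0$, $Q_Ee_0$ and $Q_B\chi$ are Hermitian with respect to the real inner product $\operatorname{Re}\langle\cdot,\cdot\rangle$, as in the paper's conventions. They anticommute pairwise where needed, and each squares to the corresponding charge squared. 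One checks, for instance, that $(e_ie_0)^2=1$, $\chi^2$ is fixed by the conventions, $e_0\chi=-\chi e_0$, and $e_ie_0\chi=\chi e_ie_0$. If $e_ie_0$ and $\chi$ fail to anticommute, I would combine them via the quaternionic structure $\mathcal J$. In any case the eigenvalues of $\mathcal A$ are $\mathcal E\pm\sqrt{|P|^2+Q_E^2+Q_B^2}$, so in equality $\psi_\infty$ lies in the kernel, i.e.
\[
\bigl(Pe_0+Q_Ee_0-Q_B\chi\bigr)\psi_\infty=-\mathcal E\psi_\infty .
\]

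\textbf{Case (1).} If $\mathcal E=|P|$, then $Q_E=Q_B=0$. Since $\widehat\nabla\psi=0$, the integrand vanishes identically, so $\operatorname{Re}\langle(\mu+J^ie_ie_0)\psi,\psi\rangle=0$. I would go back to the derivation of the Weitzenböck formula and observe that, in the equality case, one also obtains the stronger pointwise identities used in the uncharged rigidity. Differentiating $\widehat\nabla\psi=0$ gives curvature identities of the form $R_{ij}\psi=(\ldots)\psi$, which involve $E$ and $B$ through $dE$ and $dB$.

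To conclude that $E$ and $B$ vanish, the cleanest route is to use the divergence theorem. Since $\operatorname{div}E=\operatorname{div}B=0$, the fluxes through spheres are constant, and zero total charge means zero flux. I expect that the vector fields built from spinor bilinears, for example $\langle E,\cdot\rangle$ paired with $N$ and $X$, have divergence forced by the equations to be a positive multiple of $|E|^2+|B|^2$ times a null-ness quantity. Concretely, I would use that $\mathcal E=|P|$ makes $\psi_\infty$ null, $N_\infty=|X_\infty|$. I would then show that $N^2-|X|^2$ is constant: its derivative, computed from \eqref{eq:supercov}, is expressed through $E$ and $B$ contracted with bilinears. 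Next, I would show that $\mu+\langle J,X\rangle/N$-type terms force $E=B=0$ wherever $N=|X|$.

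Once $E=B=0$, the known null-case rigidity for vacuum data \cite{HirschZhang} yields an embedding into Minkowski spacetime. This step, showing $E=B=0$, is the main obstacle. As a fallback, I would compute $d(N^2-|X|^2)$ from \eqref{eq:supercov}. I expect it to vanish identically, by the same algebra that makes $Y$, $\widetilde Y$, $\widehat Y$ closed. It would then suffice that the quantity equals a sum of squares of $E$- and $B$-bilinears, which vanish at infinity.

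\textbf{Case (2).} If $\mathcal E>|P|$, I compute $N_\infty^2-|X_\infty|^2$ for a kernel spinor of $\mathcal A$. Pairing the kernel equation with $\psi_\infty$ gives
\[
\mathcal E N_\infty=-\langle P,X_\infty\rangle+\ldots,
\]
and pairing with $e_ie_0\psi_\infty$ gives the remaining components. From these, I would express $X_\infty=-N_\infty P/\mathcal E$ plus charge-dependent corrections, and use the Fierz identity. This should yield
\[
N_\infty^2-|X_\infty|^2=N_\infty^2\,\frac{Q_E^2+Q_B^2}{\mathcal E^2}=N_\infty^2\,\frac{\mathcal E^2-|P|^2}{\mathcal E^2}>0.
\]
The claimed constancy of this quantity is automatic at infinity, since $\psi_\infty$ is a constant spinor.
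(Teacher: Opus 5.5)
\textbf{Part (1) has a genuine gap.} It sits exactly at the step you call the main obstacle. None of your proposed mechanisms can force $E=B=0$, because the configuration you need to exclude is locally consistent.

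Each of your tools falls short:

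Your fallback fails outright. The quantity $N^2-|X|^2=K^2+L^2$ is not constant in general: on a Majumdar--Papapetrou slice, $N=U^{-1}$ and $X=0$.

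The pointwise vanishing of $\mu N+\langle J,X\rangle$ constrains only the charged $\mu,J$. These vanish identically on any electrovacuum data, including Einstein--Maxwell pp-wave data with $N=|X|$ and $E\neq0$.

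Your divergence idea, carried out with the super-covariant equation, gives
\[
\operatorname{div}(KB-LE)=N(|E|^2+|B|^2)-2\langle X,E\times B\rangle\ \ge\ N(|E|-|B|)^2 .
\]
The flux at infinity is $4\pi(K_\infty Q_B-L_\infty Q_E)$, and the cylindrical ends contribute nothing since $\psi\to0$ there. For $Q_E=Q_B=0$ this yields only $|E|=|B|$, $E\perp B$, and $X=N\,(E\times B)/|E\times B|$ on the support of $(E,B)$. That is a null Maxwell field carried by a null spinor, which is precisely the pp-wave situation. It admits super-covariantly constant null spinors with $E,B\neq0$, so it can only be ruled out by a global asymptotic rigidity argument.

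Separately, if $\mathcal E=|P|=0$ then $\mathcal A=0$, so $\psi_\infty$ need not be null.

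\textbf{The missing idea is to reverse the order, as the paper does.} Set $\mu_0=\frac12(R_g-|k|_g^2+(\operatorname{tr}_gk)^2)$ and $J_0=\operatorname{div}_g(k-(\operatorname{tr}_gk)g)$. The charged condition gives $\mu_0-|J_0|\ge(\mu-|J|)+(|E|-|B|)^2\ge0$. Hence $\mathcal E=|P|$ is the equality case of the \emph{uncharged} spacetime PMT. The argument then runs:
\begin{enumerate}
\item The uncharged equality case yields $\varphi$ with $\nabla_i\varphi=-\frac12k_{ij}e_je_0\varphi$ and $\varphi\to0$ on cylindrical ends.
\item An ODE argument as in Lemma~\ref{lem:oneend} shows that a cylindrical end would force $\varphi\equiv0$, so there are none.
\item The Hirsch--Zhang rigidity theorem then embeds $(M,g,k)$ into Minkowski space.
\item Only now does $E=B=0$ follow. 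The embedding gives $\mu_0=0$, hence $0=\mu+|E|^2+|B|^2\ge|J|+|E|^2+|B|^2$.
\end{enumerate}
The global input that excludes the pp-wave comes from the vacuum rigidity theorem. It is applied to $(M,g,k)$ directly and needs no prior knowledge of $E,B$.

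\textbf{Part (2) is correct but takes a different route from the paper.} The paper argues by contradiction. If $N_\infty=|X_\infty|$, then $K_\infty=L_\infty=0$. The boundary term in \eqref{eq:globalwitten} then equals $\mathcal EN_\infty+\langle P,X_\infty\rangle\ge(\mathcal E-|P|)N_\infty>0$, contradicting its vanishing.

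Your explicit computation works once you fix a sign. Since $\chi$ commutes with $e_i$ and anticommutes with $e_0$, you have $e_ie_0\chi=-\chi e_ie_0$, not $+$. With this, $\Gamma:=P^ie_ie_0+Q_Ee_0-Q_B\chi$ satisfies $\Gamma^2=(|P|^2+Q_E^2+Q_B^2)I$ and $\{e_je_0,\Gamma\}=2P_j$. Taking real parts in $\langle e_je_0\Gamma\psi_\infty,\psi_\infty\rangle=-\mathcal E(X_\infty)_j$ gives exactly $X_\infty=-N_\infty P/\mathcal E$, with no charge corrections. Hence $N_\infty^2-|X_\infty|^2=N_\infty^2(Q_E^2+Q_B^2)/\mathcal E^2>0$.

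Your route buys an explicit value. The paper's route needs only the identity $N^2-|X|^2=K^2+L^2$.
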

\begin{proof}
    1. In this case, the PMT with electromagnetic fields reduces to the spacetime PMT. Then there exists a spinor $\varphi$ satisfying $\nabla_i\varphi=-\frac{1}{2}k_{ij}e_je_0\varphi$, and $\varphi\to 0$ at cylindrical ends. However, an ODE argument shows that $\varphi=0$; see Lemma \ref{lem:oneend} for a similar, more detailed proof.
    Therefore, there is no cylindrical ends. Hence,  applying \cite{HirschZhang}, the result follows.

    2. Suppose $N_\infty=|X_\infty|$. Then 
    \[\langle e_0\psi_\infty,\psi_\infty \rangle=\langle\chi\psi_\infty,\psi_\infty\rangle=0.\]
    However, the boundary term in equation \eqref{eq:globalwitten} does not vanish. Thus, $N^2_\infty-|X_\infty|^2\ne0$. Moreover, since $\psi_\infty$ is a constant spinor, $N^2_\infty-|X_\infty|^2$ is a nonzero constant.
\end{proof}
From now on, we only consider the case $\mathcal{E}>|P|$.

\section{Topology of $M$}

\begin{lemma}\label{lem:oneend}
$M$ has exactly one asymptotically flat end.
\end{lemma}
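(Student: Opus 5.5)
Suppose, for contradiction, that $M$ has a second asymptotically flat end $\mathcal U_2$ besides the designated end $\mathcal U_1$. The plan is to show that the super-covariantly constant spinor $\psi$ from the previous proposition vanishes identically, which contradicts the fact that $\psi\to\psi_\infty\neq 0$ on $\mathcal U_1$.

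First, the relevant properties of $\psi$. By construction $\psi_\infty$ is zero near every end other than $\mathcal U_1$. Hence on $\mathcal U_2$ we have $\psi\in W^{1,2}_\delta$ with $\delta<-\tfrac12$, so $\psi$ decays at infinity along $\mathcal U_2$, at least in an averaged sense. Since $\widehat\nabla\psi=0$ and the coefficients $k,E,B$ lie in $C^{1,\alpha}_{-q-1}$, weighted elliptic regularity (or integration of the ODE) upgrades this to pointwise decay: $|\psi|\to 0$ along rays to infinity in $\mathcal U_2$.

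Second, a Gronwall argument along curves. Because $\widehat\nabla\psi=0$, we have
\[
\bigl|\nabla|\psi|^2\bigr|=2\,\bigl|\operatorname{Re}\langle\nabla\psi,\psi\rangle\bigr|\le C\bigl(|k|+|E|+|B|\bigr)|\psi|^2 .
\]
Hence along any piecewise smooth curve $\gamma$,
\[
\bigl|\log|\psi|^2(\gamma(1))-\log|\psi|^2(\gamma(0))\bigr|\le C\int_\gamma\bigl(|k|+|E|+|B|\bigr)
\]
wherever $\psi\neq0$. The ODE $\nabla_{\dot\gamma}\psi=A(\dot\gamma)\psi$ is linear, so its zero set is all-or-nothing along $\gamma$. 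It follows that $\psi$ either vanishes identically on the connected manifold $M$ or never vanishes.

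Third, conclude using the decay rate. Assume $\psi$ never vanishes, fix a point $p$ in the compact core, and take radial rays $\gamma_r$ in $\mathcal U_2$ from $\{|x|=r_0\}$ out to $\{|x|=r\}$. The coefficients decay like $|x|^{-q-1}$ with $q>\tfrac12$, so $\int_{\gamma_r}(|k|+|E|+|B|)\le C\int_{r_0}^\infty s^{-q-1}\,ds<\infty$, uniformly in $r$. Thus $|\psi|^2(\gamma_r(1))\ge c\,|\psi|^2(p)>0$ is bounded below uniformly along the end. This contradicts $|\psi|\to 0$ on $\mathcal U_2$. Therefore $\psi\equiv 0$, which in turn contradicts $\psi\to\psi_\infty\ne0$ on $\mathcal U_1$.

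The main obstacle is the first step: converting the $W^{1,2}_\delta$ membership into genuine pointwise decay on $\mathcal U_2$. I would avoid full pointwise control by using the Gronwall lower bound directly. That bound gives $|\psi|\ge c>0$ on all of $\mathcal U_2$ outside a compact set. But then $\int_{\mathcal U_2}|x|^{2\delta-3}|\psi|^2\,dx$ diverges, because $2\delta-3>-4$ and the integrand is of order $|x|^{2\delta-3}$ on a region whose volume element is $|x|^2\,d|x|$. Strictly, $W^{1,2}_\delta$ membership requires this integral to be finite, which it is not for a function bounded away from zero when $\delta>-\tfrac32$. Checking this weight convention carefully against Definition \ref{Def:weighted} is the step to get right.
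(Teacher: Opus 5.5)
Your proposal is correct and is essentially the paper's argument. The Gronwall bound along radial rays into the second end makes $|\psi|$ uniformly bounded below there, which is incompatible with the decay of $\psi\in W^{1,2}_\delta$ on that end, so $\psi\equiv0$. Your version for $|\psi|^2$ via $d|\psi|^2=2\operatorname{Re}\langle\nabla\psi,\psi\rangle$ conveniently avoids the frame connection coefficients. In your fallback computation, however, fix the sign. By Definition \ref{Def:weighted} the zeroth-order weight of $W^{1,2}_\delta$ is $|x|^{-2\delta-3}$, so $|\psi|\ge c$ gives $\int^\infty r^{-2\delta-1}\,dr=\infty$ because $\delta<0$. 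With your exponent $2\delta-3<-4$ the integral would actually converge. Once corrected, this route rigorously supplies the decay $\psi\to0$ on the second end that the paper simply asserts.
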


\begin{proof} 
Suppose $\mathcal U'$ is another
asymptotically flat end.  Proposition~\ref{existence} gives a
solution with $\psi\to0$ on $\mathcal U'$, while it tends to
$\psi_\infty\neq0$ at the
distinguished end.  In an asymptotically flat spin frame,
the gradient equation $\widehat{\nabla}\psi=0$ takes the form
\[
 \partial_i\psi=-\sigma_i\psi,
 \qquad |\sigma_i|\leq Cr^{-1-q},
\]
where $\sigma_i$ is a linear operator and $|\sigma_i|$ is the pointwise operator norm. 
Along an outward radial ray $\gamma$ toward $\mathcal{U}'$, Gronwall's inequality gives
\[
 |\psi(\gamma(r))|\geq |\psi(p)|\exp\left(-\frac{C}{q r(p)^q}\right).
\]
Letting $r\to\infty$ and using $\psi(\gamma(r))\to0$ yields $\psi(p)=0$.
The gradient equation implies $\psi\equiv0$ on $M$, a
contradiction.  Hence, no second asymptotically flat end exists.
\end{proof}

\begin{lemma}
\label{lem:simplyconnected}
$M$ is simply connected.  In particular,
$H^1(M;\R)=0$. 
\end{lemma}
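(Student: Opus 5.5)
We prove simple connectivity by passing to the universal cover and counting asymptotically flat ends, using Lemma~\ref{lem:oneend}. Let $\pi\colon\widetilde M\to M$ be the universal covering, and pull back $(g,k,E,B)$ to $\widetilde M$. The pulled-back data still satisfy the charged dominant energy condition pointwise, and $\operatorname{div}E=\operatorname{div}B=0$. The designated asymptotically flat end $\mathcal U\cong\R^3\setminus\overline B$ is simply connected. Hence $\pi^{-1}(\mathcal U)$ consists of $|\pi_1(M)|$ disjoint copies of $\mathcal U$, each isometric to it, and each such copy is an asymptotically flat end of $\widetilde M$. It therefore suffices to show that $\widetilde M$ cannot have two asymptotically flat ends.

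We argue via the spinor rather than the full admissibility of $\widetilde M$. The cylindrical ends $(0,\infty)\times S^2$ are also simply connected, so they lift to copies of themselves. However, $\widetilde M$ need not have finitely many ends or $L^1$ energy if $\pi_1(M)$ is infinite, so we do not try to rerun Proposition~\ref{existence} on $\widetilde M$.

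Instead we use the spinor $\psi$ on $M$ with $\widehat\nabla\psi=0$ and $\psi\to\psi_\infty\neq0$ on $\mathcal U$. Its pullback $\widetilde\psi=\pi^*\psi$ is a parallel section for $\widehat\nabla$ on $\widetilde M$. The spin structure lifts, and $\widehat\nabla$ is defined from the pulled-back data. The spinor $\widetilde\psi$ tends to the same nonzero constant $\psi_\infty$ on every lift of $\mathcal U$.

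Suppose $\pi_1(M)\ne 1$, and pick a nontrivial deck transformation $\gamma$. Choose $p$ in one lift $\widetilde{\mathcal U}_1$ of $\mathcal U$, far out. Join $p$ to $\gamma p\in\widetilde{\mathcal U}_2$ by a path, and project it to a loop $c$ in $M$ based at $\pi(p)$ that is not null-homotopic.

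The key point is a holonomy-free trivialization on the end. Since $\widehat\nabla\psi=0$ and $\psi\neq0$ everywhere (Gronwall, as in Lemma~\ref{lem:oneend}), this alone gives no contradiction: parallel transport around $c$ just returns $\psi(\pi(p))$. So the contradiction must come from a genuine global input.

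The cleanest such input is a two-ended (or multi-ended) version of the Witten identity \eqref{eq:globalwitten} on the finite cover when $\pi_1(M)$ has a finite-index subgroup. Failing that, one can use the standard topological fact for manifolds with nonnegative-type energy conditions: by the charged DEC and Remark~\ref{ebm}, the MOTS/marginally trapped analysis rules out nontrivial $\pi_1$ through Galloway's topological censorship or the Schoen--Yau type argument.

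The plan I would actually follow is this. First show $H_1(M;\mathbb Z)$ is finite. If not, there is a nonzero closed 1-form and a connected infinite cyclic cover $\widehat M$. That cover has infinitely many lifts of $\mathcal U$, and the standard argument with a solution on $\widehat M$ built by exhaustion leads to a contradiction: solve $\widehat{\mathcal D}\phi=0$ with $\phi\to\psi_\infty$ on one lift and $\to0$ on another, using the coercivity estimate \eqref{eq:compactcoercivity} and an exhaustion by compact domains, and then conclude as in Lemma~\ref{lem:oneend}. Second, for a finite cover of $M$, the data are admissible with at least two asymptotically flat ends. The equality case still holds on the designated end, since mass, momentum, and charges are computed locally at infinity and are unchanged under pullback. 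This contradicts Lemma~\ref{lem:oneend} applied to the cover. Third, handle a general infinite $\pi_1$ by the same exhaustion argument on $\widetilde M$.

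The main obstacle is this infinite-cover case. Proposition~\ref{existence} relies on Fredholm theory with finitely many ends, which is unavailable there. I would obtain the required spinor as a limit of solutions on exhausting domains, using the uniform estimate \eqref{eq:compactcoercivity}. The limit is a $\widehat\nabla$-parallel spinor decaying on a second end, which the Gronwall argument then forces to vanish identically. Finally, $H^1(M;\R)=0$ follows from simple connectivity by Hurewicz and de Rham.
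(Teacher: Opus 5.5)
Your Step 2 is exactly the paper's proof. A connected cover of finite degree $d\ge2$ is again admissible: the simply connected ends $\mathcal U$ and $\mathcal C_\ell$ lift to $d$ copies each, $\mu$ and $J$ stay integrable, and completeness is preserved. It saturates the inequality on the lifted designated end, since $\mathcal E,P,Q_E,Q_B$ are computed at infinity. It has $d\ge2$ asymptotically flat ends, which contradicts Lemma~\ref{lem:oneend}.

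The gap is in what you build around this. Your Steps 1 and 3 rest on an exhaustion construction on an infinite cover that you only name, and it is far from routine. The identity \eqref{eq:compactcoercivity} bounds $\|\widehat\nabla\phi\|_2$ by $\|\widehat{\mathcal D}\phi\|_2$, but it gives no uniform control of $\phi$ itself. With infinitely many ends there is no Fredholm theory or weighted Poincar\'e inequality at hand to supply that control. So nothing prevents the limit of your approximate solutions from degenerating or losing the asymptotics $\phi\to\psi_\infty$. Moreover, to upgrade $\widehat{\mathcal D}\phi=0$ to $\widehat\nabla\phi=0$ you would need a version of \eqref{eq:globalwitten} on a manifold with infinitely many ends. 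There, infinitely many boundary contributions (or those on $\partial\Omega_j$) must be shown to vanish. The appeal to topological censorship or a Schoen--Yau argument is not an argument either.

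The missing idea is that the infinite-cover case never has to be treated. After truncating the ends, $M$ is the interior of a compact manifold whose boundary consists of $2$-spheres. Capping these with balls gives a closed oriented $3$-manifold $M_0$ with $\pi_1(M_0)\cong\pi_1(M)$ by van Kampen. Fundamental groups of closed $3$-manifolds are residually finite (Hempel, together with geometrization). So if $\pi_1(M)\neq1$ there is a proper subgroup of finite index, hence a connected cover of finite degree $d\ge2$, and your Step 2 finishes the proof. This is what the paper's phrase ``passing to a finite cover'' tacitly uses.

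For $H^1(M;\R)=0$, no deep input is needed. This is all that is used later to integrate $Y,\widetilde Y,\widehat Y$ to the Tod map. A nonzero class yields a surjection $\pi_1(M)\to\mathbb Z\to\mathbb Z/2$, i.e.\ a connected double cover. So in your Step 1 you should take a finite cyclic cover, not the infinite cyclic one.
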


\begin{proof}
    If $M$ is not simply connected, then by passing to a finite cover, we obtain a manifold $\widetilde{M}$ that possesses multiple asymptotically flat ends and saturates the mass inequality. However, this contradicts Lemma \ref{lem:oneend}.
\end{proof}

\section{Existence and properties of the Tod map}

Let $\psi=(\psi_1,\psi_2)\in\overline{\mathcal S}(M)$ where $M$ is a $3$-dimensional manifold and $\overline{\mathcal S}=\mathcal S\oplus\mathcal S$ is the spacetime spinor bundle.
Define the vector fields 
\begin{align*}
    X_i=&\langle e_ie_0\psi,\psi\rangle=-2\operatorname{Re}\langle e_i\psi_1,\psi_2\rangle,\\
    Y_i=& \operatorname{Im}\langle e_i \chi\psi,\psi\rangle=\operatorname{Im}\langle e_i\psi_1,\psi_1\rangle+\operatorname{Im}\langle e_i\psi_2,\psi_2\rangle,\\
    \widetilde Y_i=& \operatorname{Im}\langle e_i \mathcal J\chi\psi,\psi\rangle=\operatorname{Im}\langle e_i\mathcal J\psi_1,\psi_1\rangle+\operatorname{Im}\langle e_i\mathcal J\psi_2,\psi_2\rangle,\\
    \widehat Y_i=& \operatorname{Re}\langle e_i \mathcal J\chi\psi,\psi\rangle=\operatorname{Re}\langle e_i\mathcal J\psi_1,\psi_1\rangle+\operatorname{Re}\langle e_i\mathcal J\psi_2,\psi_2\rangle,
\end{align*}
and the functions
\begin{align*}
N=&|\psi|^2=|\psi_1|^2+|\psi_2|^2,\\
        R=& \operatorname{Im} \langle e_0\chi \psi,\psi\rangle =2\operatorname{Im}\langle \psi_1,\psi_2\rangle,\\
               \widetilde R=& \operatorname{Im} \langle e_0\mathcal J\chi \psi,\psi\rangle= 2\operatorname{Im}\langle \mathcal J \psi_1,\psi_2\rangle,\\
                     \widehat  R=& \operatorname{Re} \langle e_0\mathcal J\chi \psi,\psi\rangle =2\operatorname{Re}\langle \mathcal J \psi_1,\psi_2\rangle.
\end{align*}
Here $\chi=e_1e_2e_3$ is the chirality operator, and $\mathcal J$ is the quaternionic structure on $\mathcal S$, cf. \cite[Remark 2.13]{AHM} and \cite[p.~32]{Friedrich}. It commutes with Clifford multiplication and differentiation and satisfies $\mathcal J^2=-1$ as well as $\mathbf i\mathcal J=-\mathcal J\mathbf i$.
We also define $\mathbf X=(N,X),\mathbf Y=(R,Y),\widetilde{\mathbf Y}=(\widetilde R,\widetilde Y), \widehat{\mathbf Y}=(\widehat R,\widehat Y)$ and note that $\mathbf X,\mathbf Y, \widetilde{\mathbf Y},\widehat{\mathbf Y}$ can be regarded as vector fields in the Killing development of $(M,g,k,N,X)$.
Here the Killing development $(\overline M,\overline g)$ is defined by $\overline M=M\times\mathbb R$ and $\overline g=-N^2dt^2+g_{ij}(dx^i+X^idt)(dx^j+X^jdt)$.
Finally, we set
\begin{align*}
    L=&\langle e_0\psi,\psi\rangle=2\operatorname{Re}\langle\psi_1,\psi_2\rangle,\\
    K=&\langle \chi \psi,\psi\rangle=|\psi_1|^2-|\psi_2|^2.
\end{align*}

\begin{theorem}
    We have
    \begin{align*}
        N^2-|X|^2=|Y|^2-R^2=|\widetilde Y|^2-\widetilde R^2=|\widehat Y|^2-\widehat R^2=K^2+L^2.
    \end{align*}
    Moreover,
    \begin{align*}
        0=&\langle X,Y\rangle -NR=\langle X,\widetilde Y\rangle-N\widetilde R=\langle X,\widehat Y\rangle -N\widehat R,\\
        0=&\langle Y,\widetilde Y\rangle -R\widetilde R=\langle Y,\widehat Y\rangle-R\widehat R=\langle \widetilde Y,\widehat Y\rangle-\widetilde R\widehat R.
    \end{align*}
    In other words, the vector fields $\mathbf X,\mathbf Y,\widetilde{\mathbf Y}, \widehat{\mathbf Y}$ are pairwise perpendicular in the Killing development and have the same absolute Lorentzian norm.
\end{theorem}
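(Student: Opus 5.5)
This is a pointwise algebraic identity, so I will prove it fiberwise, with no analysis involved. At a point, fix an orthonormal frame $e_1,e_2,e_3$ and identify $\mathcal S_p\cong\mathbb C^2\cong\mathbb H$. The Clifford action of $e_1,e_2,e_3$ is then given by $\mathbf i\sigma_1,\mathbf i\sigma_2,\mathbf i\sigma_3$ (up to sign conventions), so that $\chi$ acts as $\pm1$ on $\mathcal S$. The quaternionic structure is $\mathcal J(a,b)=(-\bar b,\bar a)$, and $\langle\cdot,\cdot\rangle$ is the standard Hermitian product. Using the component formulas already displayed, each of $X,Y,\widetilde Y,\widehat Y$ and $N,R,\widetilde R,\widehat R,K,L$ becomes a real quadratic expression in $(\psi_1,\psi_2)\in\mathbb C^2\oplus\mathbb C^2$.

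The key structural observation concerns the spatial parts. Since $\operatorname{Im}\langle e_i\phi,\phi\rangle=\langle\phi,\sigma_i\phi\rangle$ up to sign, the vector $\operatorname{Im}\langle e_i\psi_1,\psi_1\rangle$ is the Hopf vector of $\psi_1$, with length $|\psi_1|^2$. The three vectors $\operatorname{Im}\langle e_i\phi,\phi\rangle$, $\operatorname{Im}\langle e_i\mathcal J\phi,\phi\rangle$ and $\operatorname{Re}\langle e_i\mathcal J\phi,\phi\rangle$ form an orthogonal frame of $\R^3$, each of length $|\phi|^2$. This is the statement that $\phi\mapsto$ left multiplication by the unit quaternion $\phi/|\phi|$ gives the double cover $Sp(1)\to SO(3)$. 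I will check it at $\phi=(1,0)$, where it gives $(0,0,1),(1,0,0),(0,1,0)$ up to signs, and then use $SU(2)$-equivariance, since $\mathcal J$ commutes with $SU(2)$ and both sides transform by the associated rotation.

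The proof of the theorem then proceeds in steps.

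\textbf{Step 1.} Reduce by symmetry. The group $SU(2)$ acting diagonally on $(\psi_1,\psi_2)$ rotates all spatial vectors and fixes all scalars. Overall phase rotation $\psi\mapsto e^{\mathbf i\theta}\psi$ fixes $N,X,Y,R,K,L$ and rotates the pairs $(\widetilde Y,\widehat Y)$ and $(\widetilde R,\widehat R)$, because $\mathcal J$ is antilinear. There is also a hidden $U(1)$ or $SU(2)$ mixing $\psi_1$ and $\psi_2$, namely the commutant of Clifford multiplication on $\overline{\mathcal S}$ generated by $e_0$, $\chi$ and $\mathbf i$. Using these symmetries, I will normalize to $\psi_1=(a,0)$ with $a\ge0$ and $\psi_2=(b,c)$, or possibly reduce further.

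\textbf{Step 2.} In the normal form, compute all twelve quantities explicitly and verify the listed identities:
\begin{itemize}
\item the four equalities of norms with $K^2+L^2$;
\item the six orthogonality relations.
\end{itemize}

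A cleaner alternative, which I would try first, is conceptual. Write $\mathbf X,\mathbf Y,\widetilde{\mathbf Y},\widehat{\mathbf Y}$ as Dirac currents $\langle \overline e_\alpha\Gamma\psi,\psi\rangle$, where $\Gamma\in\{1,\mathbf i\chi,\mathbf i\mathcal J\chi,\mathcal J\chi\}$. Then apply the Fierz identity in $3+1$ dimensions. For a Dirac spinor with scalar invariants $K$ and $L$, the vector current $V$ and axial current $A$ satisfy
\[
V^2=-A^2=K^2+L^2,\qquad V\cdot A=0.
\]
This handles $\mathbf X$ against $\mathbf Y$. The other two currents are the charge-conjugate (Majorana-type) bilinears $\langle \overline e_\alpha \mathcal J\chi\psi,\psi\rangle$, and the analogous Fierz identities for the complex current $\psi^T C\gamma_\alpha\gamma_5\psi$ give that its real and imaginary parts are orthogonal to $V$ and $A$, and to each other, with norm $K^2+L^2$.

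\textbf{Step 3.} Fix signs. In the paper's Killing-development convention, $N^2-|X|^2\ge0$ while $|Y|^2-R^2\ge0$ has the opposite causal character. This is why the statement speaks of the same \emph{absolute} Lorentzian norm. I will match these signs carefully to the Clifford conventions $e_i^2=-1$ and $e_0^2=1$.

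\textbf{Main obstacle.} The main difficulty is bookkeeping rather than ideas. Translating the paper's $\langle\cdot,\cdot\rangle$, $\chi$, $\mathcal J$ conventions into explicit matrices has to be done so that $\mathcal J$ commutes with Clifford multiplication and the displayed component formulas (for example $X_i=-2\operatorname{Re}\langle e_i\psi_1,\psi_2\rangle$) are reproduced exactly. If the Fierz route becomes murky with $\mathcal J$, I will fall back on the brute-force normal-form computation from Step 1. Reducing to $\psi_1=(a,0)$ and $\psi_2=(b,c)$ leaves a few real parameters, and the identities become polynomial checks.
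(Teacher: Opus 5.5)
Your proposal is correct and is essentially the paper's proof: the paper also identifies $\mathcal S\cong\mathbb C^2$ with $e_j$ acting by $\mathbf i\sigma_j$ and $\mathcal J\psi_1=e_2\overline{\psi_1}$, i.e.\ $(a,b)\mapsto(\bar b,-\bar a)$, and then states that the identities follow by direct computation. This $\mathcal J$ is the negative of yours, which is harmless because each stated identity remains true under $\mathcal J\mapsto-\mathcal J$.

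Your diagonal-$SU(2)$ normalization to $\psi_1=(a,0)$ is a sound way to organize that computation on its own. The additional ``hidden'' symmetry is not needed. It is more accurately the chiral rotation $e^{\theta e_0\chi}$, a real rotation of $(\psi_1,\psi_2)$ that fixes all four currents and rotates $(L,K)$; it commutes with the even part of the Clifford algebra, not with Clifford multiplication.
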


Note that in case $\psi_2=0$, this reduces to $\operatorname{Im}\langle e_i\psi_1,\psi_1\rangle$, $\operatorname{Im}\langle \mathcal J e_i\psi_1,\psi_1\rangle$, $\operatorname{Re}\langle\mathcal J e_i\psi_1,\psi_1\rangle$ forming three perpendicular vector fields of the same length on $M$.

\begin{proof}
    In dimension 3, we can identify a spinor $\psi_1\in \mathcal S$ with a pair of complex numbers $(a,b)\in\mathbb C^2$.
The Pauli matrices are given by
\begin{align*}
\sigma_1= \begin{pmatrix}
0 & 1\\
1 & 0
\end{pmatrix},\quad \sigma_2= \begin{pmatrix}
0 & -i\\
i & 0
\end{pmatrix},\quad \sigma_3= \begin{pmatrix}
1 & 0\\
0 & -1
\end{pmatrix},
\end{align*}
and the unit vectors $e_1,e_2,e_3$ act on $\mathcal S$ via $i\sigma_1,i\sigma_2,i\sigma_3$.
Moreover, $\mathcal J\psi_1=e_2\overline\psi_1$ where $\overline\psi_1$ denotes complex conjugation.
Using this setup, the above result is a direct computation.
\end{proof}

Next, we define 
\begin{align*}
    W_i=&\operatorname{Im}\langle e_i\psi_1,\psi_1\rangle-\operatorname{Im}\langle e_i\psi_2,\psi_2\rangle,\\
    Z_i=&-2\operatorname{Im}\langle e_i\psi_1,\psi_2\rangle,\\
    \widetilde    W_i=&\operatorname{Im}\langle \mathcal J e_i\psi_1,\psi_1\rangle-\operatorname{Im}\langle \mathcal J e_i\psi_2,\psi_2\rangle,\\
 \widetilde   Z_i=&-2\operatorname{Im}\langle\mathcal J  e_i\psi_1,\psi_2\rangle,\\
\widehat        W_i=&\operatorname{Re}\langle\mathcal J  e_i\psi_1,\psi_1\rangle-\operatorname{Re}\langle\mathcal J  e_i\psi_2,\psi_2\rangle,\\
  \widehat   Z_i=&-2\operatorname{Re}\langle \mathcal J e_i\psi_1,\psi_2\rangle.
\end{align*}
Note that $W,Z,\widetilde W,\widetilde Z,\widehat W,\widehat Z$ are not spacetime quantities, i.e. they depend on the choice of $e_0$.
However, they will always be paired with $E,B$ which also depend on $e_0$.
Like the Faraday tensor $F$, the pairs $(W,Z)$, $(\widetilde W,\widetilde Z)$, $(\widehat W,\widehat Z)$ correspond to a $2$-form in the spacetime.

\begin{theorem}
Suppose that $\psi$ is super-covariantly constant, i.e.
\begin{align*}
         \nabla_i \psi=\left(-\frac{1}{2}k_{ij}e_je_0+\frac{1}{2}Ee_ie_0-\frac{1}{2}B e_i\chi\right)\psi.
\end{align*}
Then
\begin{align*}
    \nabla_iN=&-k_{ij} X_j - LE_i+KB_i,
  \\  \nabla_iX_j=&-k_{ij}N+ K(\ast E)_{ij}+L(\ast B)_{ij},
   \\ \nabla_iR=&-k_{ij}Y_j+(\ast E)_{ij}W_j-(\ast B)_{ij}Z_j,
  \\  \nabla_iY_j=&  -k_{ij}R    +Z_iE_j+Z_jE_i-\langle Z,E\rangle g_{ij}+W_iB_j+W_jB_i-\langle W,B\rangle g_{ij}  ,
   \\ \nabla_i\widetilde   R=&-k_{ij}\widetilde   Y_j+(\ast E)_{ij}\widetilde   W_j-(\ast B)_{ij}\widetilde   Z_j,
  \\  \nabla_i\widetilde   Y_j=&  -k_{ij}\widetilde   R    +\widetilde   Z_iE_j+\widetilde   Z_jE_i-\langle\widetilde    Z,E\rangle g_{ij}+\widetilde   W_iB_j+\widetilde   W_jB_i-\langle \widetilde   W,B\rangle g_{ij}  ,
     \\ \nabla_i\widehat  R=&-k_{ij}\widehat  Y_j+(\ast E)_{ij}\widehat  W_j-(\ast B)_{ij}\widehat  Z_j,
  \\  \nabla_i\widehat  Y_j=&  -k_{ij}\widehat R    +\widehat  Z_iE_j+\widehat  Z_jE_i-\langle \widehat  Z,E\rangle g_{ij}+\widehat  W_iB_j+\widehat  W_jB_i-\langle \widehat  W,B\rangle g_{ij}  .
\end{align*}
In particular, $dY^\flat=d\widetilde Y^\flat=d\widehat Y^\flat=0$ and $(N,X)$ is a Killing vector field in the Killing development of $(M,g,k)$.
\end{theorem}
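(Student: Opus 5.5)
The plan is to differentiate each bilinear by the Leibniz rule and substitute the super-covariant equation \eqref{eq:supercov}. Concretely, for any fixed Clifford element $\Gamma$ built from $e_0$, $\chi$, $\mathcal J$ and a frame vector $e_j$, I will work in a synchronous frame at a point, so that $\nabla e_j=0$ there. Since $\nabla$ commutes with $e_0$, $\chi$ and $\mathcal J$, this gives
\[
\nabla_i\langle\Gamma\psi,\psi\rangle=\langle\Gamma A_i\psi,\psi\rangle+\langle\Gamma\psi,A_i\psi\rangle=\langle(\Gamma A_i+A_i^{*}\Gamma)\psi,\psi\rangle,
\]
where
\[
A_i=-\tfrac12k_{ij}e_je_0+\tfrac12Ee_ie_0-\tfrac12Be_i\chi .
\]
The adjoints follow from the convention $e_j^*=-e_j$, $e_0^*=e_0$ and $\chi^*=\chi$, together with the fact that $\chi$ is central in the spatial Clifford algebra and anticommutes with $e_0$ (as $\chi$ is odd and $e_0$ anticommutes with each $e_i$). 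For $\mathcal J$, I use its antilinearity, so $\langle\mathcal J\phi,\eta\rangle$ picks up a conjugation, and that is why real and imaginary parts must be tracked separately. Each formula then reduces to a Clifford-algebra identity: I expand $\Gamma A_i+A_i^*\Gamma$ and rewrite products such as $e_je_ie_ke_0$ via $e_je_i=-\delta_{ij}+\epsilon_{ijl}e_l\chi$ up to sign conventions. The result is recognized as one of the listed bilinears $N,X,K,L,R,Y,W,Z$, or their $\mathcal J$-versions.

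I will carry this out in the following order.
\begin{enumerate}
\item Compute $\Gamma=1$ (giving $\nabla N$) and $\Gamma=e_je_0$ (giving $\nabla X$). The $k$-terms give $-k_{ij}X_j$ and $-k_{ij}N$. In the $E,B$ terms, $e_0$ and $\chi$ survive, producing $L=\langle e_0\psi,\psi\rangle$ and $K=\langle\chi\psi,\psi\rangle$. The terms involving $e_ie_j$ antisymmetrize into $\ast E$ and $\ast B$.
\item Repeat with $\Gamma=\mathbf i e_0\chi$ and $\Gamma=\mathbf i e_j\chi$, taking imaginary parts. Now the $E,B$ terms produce the non-spacetime bilinears $W$ and $Z$. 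Here the symmetric combinations $Z_iE_j+Z_jE_i-\langle Z,E\rangle g_{ij}$ come from the expansion of $e_je_ie_k+e_ke_ie_j$-type products.
\item Obtain the tilde and hat versions verbatim by inserting $\mathcal J$. Since $\mathcal J$ commutes with Clifford multiplication, the algebra is identical; only real versus imaginary parts are swapped, according to whether $\mathbf i$ or $\mathcal J$ carries the phase.
\end{enumerate}
As a safeguard against sign errors, I will cross-check each identity in the Pauli-matrix model of the previous theorem, with $e_j=i\sigma_j$ and $\mathcal J\psi_1=e_2\overline\psi_1$, checking against a constant-coefficient test case.

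The consequences then follow directly. Because $\nabla_iY_j$ is $-k_{ij}R$ plus terms symmetric in $i,j$, and $k$ is symmetric, $\nabla_iY_j-\nabla_jY_i=0$, so $dY^\flat=0$; the same argument applies to $\widetilde Y$ and $\widehat Y$. For the Killing property, the standard criterion for $(N,X)$ to generate a Killing field of the Killing development is
\[
\nabla_{(i}X_{j)}=-2Nk_{ij}\quad\text{(up to the paper's normalization)},
\]
together with the corresponding formula for $\nabla N$ and the constraint structure. The symmetric part of $\nabla_iX_j$ is $-k_{ij}N$, because $\ast E$ and $\ast B$ are antisymmetric. This is exactly the Killing initial data (KID) condition in the convention where the Killing development is defined with lapse $N$ and shift $X$. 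I therefore expect to invoke the known characterization of KIDs (as in \cite{HirschZhang}) rather than reprove it.

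The main obstacle is bookkeeping in the $Y$-equations. There, the antilinear $\mathcal J$, the factor $\mathbf i$, and the products $e_ie_je_k$ interact, and the coefficients of $Z_iE_j+Z_jE_i$ versus $\langle Z,E\rangle g_{ij}$ must come out exactly. I would first verify these in the Pauli model with $k=0$ and constant $E$, and then read off the general identity by linearity in $(k,E,B)$.
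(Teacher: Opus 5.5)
Your plan is correct and is the same direct Leibniz-rule computation the paper relies on (its proof consists only of the remark that this is a direct computation). Your adjoint conventions and the relation $e_je_i=-\delta_{ij}+\epsilon_{ijl}e_l\chi$ check out, and the $\mathcal J$-versions do transfer verbatim, since the same real-coefficient element $\Gamma A_i+A_i^*\Gamma$ is simply paired with $(\mathcal J\psi,\psi)$.

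Two small fixes are needed. First, $e_0\chi$ and $e_j\chi$ are skew-adjoint, so take $\Gamma=e_0\chi$, $e_j\chi$ and imaginary parts; with $\Gamma=\mathbf i e_0\chi$ the pairing is already real and equals $-R$, so its imaginary part vanishes. Second, the Killing claim needs only $\nabla_{(i}X_{j)}=-Nk_{ij}$: $\partial_t$ is Killing for $\overline g$ by construction, and this identity says precisely that $k$ is the second fundamental form of $\{t=0\}$. So no $\nabla N$ formula or constraint/KID equations enter.
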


\begin{proof}
    This is a direct computation.
\end{proof}

\begin{definition}
    We call $T=(x_1,x_2,x_3):M\to\mathbb R^3$ Tod map if 
    \begin{align*}
        DT=(Y,\widetilde Y,\widehat Y).
    \end{align*}
\end{definition}

Note that $T$ always exists since $M$ is simply connected by Lemma \ref{lem:simplyconnected}.

\begin{lemma}\label{T}
Let $v:=\sqrt{N^2-|X|^2}$. Then
    we have
    \begin{align*}
        T^\ast \delta =v^2g+X^\flat\otimes X^\flat
    \end{align*}
    and
    \begin{align*}
        dx_1\wedge dx_2\wedge dx_3
 &=Nv^2d\mu_g.
    \end{align*}
    In particular, the singular values of the Tod map are $v, v, N$ and the Tod map is orientation preserving.
\end{lemma}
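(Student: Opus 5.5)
Since $DT=(Y,\widetilde Y,\widehat Y)$, we have $T^\ast\delta=Y^\flat\otimes Y^\flat+\widetilde Y^\flat\otimes\widetilde Y^\flat+\widehat Y^\flat\otimes\widehat Y^\flat$. The statement is therefore pointwise linear algebra for the spinor bilinears. I would work at a fixed point with an orthonormal frame and use the orthogonality theorem above.

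The key observation is that in the Killing development the four vectors $\mathbf X=(N,X)$, $\mathbf Y$, $\widetilde{\mathbf Y}$, $\widehat{\mathbf Y}$ are pairwise Lorentz-orthogonal with common absolute norm $v^2=K^2+L^2$. Here $\mathbf X$ is timelike or null and the other three are spacelike.

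\textbf{Case $v>0$.} The vectors $\mathbf X/v$, $\mathbf Y/v$, $\widetilde{\mathbf Y}/v$, $\widehat{\mathbf Y}/v$ form a Lorentz orthonormal basis of $\mathbb R^{1,3}$. The completeness relation then gives
\[
-\mathbf X\otimes\mathbf X+\mathbf Y\otimes\mathbf Y+\widetilde{\mathbf Y}\otimes\widetilde{\mathbf Y}+\widehat{\mathbf Y}\otimes\widehat{\mathbf Y}=v^2\eta,
\]
with $\eta=\operatorname{diag}(-1,1,1,1)$. Restricting to the spatial block gives $Y\otimes Y+\widetilde Y\otimes\widetilde Y+\widehat Y\otimes\widehat Y=v^2 g+X\otimes X$, which is the first claim.

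\textbf{Case $v=0$.} The identity is a closed condition, so I would obtain it either by continuity in the spinor (as polynomial identities in $\psi$, valid on the dense set where $K^2+L^2\neq0$) or directly from the Pauli-matrix coordinates in the previous proof. The polynomial-identity route is the cleanest: both sides are real quadratic forms in $\psi\otimes\bar\psi$, and they agree on an open dense set of $\psi\in\mathbb C^4$.

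\textbf{Singular values and volume form.} The eigenvalues of $v^2g+X^\flat\otimes X^\flat$ are $v^2$ (twice, on $X^\perp$) and $v^2+|X|^2=N^2$. So the singular values of $DT$ are $v,v,N$, and $|\det DT|=Nv^2$.

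\textbf{Sign of the determinant.} This is the only step needing care. Again $\det(Y,\widetilde Y,\widehat Y)$ and $Nv^2$ are both polynomials in the components of $\psi$, so I would show $\det(Y,\widetilde Y,\widehat Y)=+Nv^2$ as an algebraic identity.

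I expect the main obstacle to be checking this sign. The identity $\det^2=N^2v^4$ holds everywhere, and on the connected set where $Nv^2\neq0$ the determinant has a constant sign. The point is that $\{\psi: K^2+L^2\ne0\}$ is the complement of a real-codimension-$\geq2$ algebraic set in $\mathbb C^4$, hence connected. So it suffices to evaluate at a single spinor, for instance $\psi=((1,0),0)$. There $N=1$, $X=0$, $K=1$, $L=0$, and with $e_j=i\sigma_j$ and $\mathcal J\psi_1=e_2\overline\psi_1$ one computes $Y,\widetilde Y,\widehat Y$ explicitly. These should be $\pm e_3,\pm e_1,\pm e_2$ up to ordering, and one checks that the resulting orientation is positive with respect to $\chi=e_1e_2e_3$. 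If the sign comes out negative, the convention for $\mathcal J$ or the ordering would need adjusting, but the paper asserts the positive sign.

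Since the identity holds in $\psi$ at every point of $M$, pulling back gives $dx_1\wedge dx_2\wedge dx_3=Nv^2\,d\mu_g\geq0$. Hence $T$ is orientation preserving.
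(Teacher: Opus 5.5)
Your proposal is correct, but it argues differently from the paper. The paper's proof is just ``another direct computation'' in the model $e_j=i\sigma_j$, $\mathcal J\psi_1=e_2\overline\psi_1$: one expands $Y,\widetilde Y,\widehat Y$ in the components of $\psi$ and checks by hand the quartic identity for $T^\ast\delta$ and the sextic identity $\det(Y,\widetilde Y,\widehat Y)=Nv^2$.

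You instead obtain the metric identity as a formal consequence of the preceding orthogonality theorem. For $v>0$, the matrix $B$ with columns $\mathbf X/v,\mathbf Y/v,\widetilde{\mathbf Y}/v,\widehat{\mathbf Y}/v$ satisfies $B^{\mathsf T}\eta B=\eta$, hence $B\eta^{-1}B^{\mathsf T}=\eta^{-1}$. After multiplying by $v^2$, the spatial block of this reads $-X\otimes X+Y\otimes Y+\widetilde Y\otimes\widetilde Y+\widehat Y\otimes\widehat Y=v^2g$. This also explains the $X^\flat\otimes X^\flat$ term as the spatial shadow of $-\mathbf X\otimes\mathbf X$. The remaining steps are as follows:
\begin{itemize}
\item Density covers $v=0$.
\item The Gram determinant gives $|\det DT|=Nv^2$.
\item The sign reduces to a single evaluation because $\{K^2+L^2\neq0\}$ is connected. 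The real gradients $2(\psi_1,-\psi_2)$ and $2(\psi_2,\psi_1)$ of $K$ and $L$ are orthogonal and both of length $2|\psi|$, so $\{K=L=0\}$ is a codimension-two submanifold together with the origin.
\end{itemize}

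The trade-off is this. The paper's computation settles the sign automatically and needs no topology. Yours needs essentially no new computation beyond an already-proved theorem, at the price of the density and connectedness arguments.

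The evaluation you left open does come out right, so your hedge is unnecessary. For $\psi=((1,0),0)$ one finds $\mathcal J\psi_1=(0,-1)^{\mathsf T}$. With the Hermitian product complex-linear in the first slot, $Y=e_3$, $\widetilde Y=-e_1$, $\widehat Y=-e_2$, so $\det(Y,\widetilde Y,\widehat Y)=+1=Nv^2$. The opposite convention gives $(-e_3,e_1,-e_2)$, again with determinant $+1$. Here $(e_1,e_2,e_3)$ is positively oriented, consistent with $\chi=e_1e_2e_3$ acting as $+1$ on $\mathcal S$ in this model.
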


\begin{proof}
    This is another direct computation.
\end{proof}

\section{Mapping degree of the Tod map and compactness of the null set}

In this section, we prove that the Tod map $T$ has degree one by analyzing its behavior in the asymptotic regions. 
More precisely, we show that $T$ is proper away from the cylindrical ends and has degree one in the asymptotically flat end.
Since $T\to x_a$ in each cylindrical end $\mathcal C_a$ for some $x_a\in\mathbb R^3$, the result follows.

\medskip

Additionally, we prove that the null set where $N=|X|$ is contained away from both the asymptotically flat and the cylindrical ends.

\subsection{Tod map in the asymptotically flat end}

\begin{lemma}
    We have
\begin{equation*}
 DT=\mathcal{L}+O(r^{-q}),
\end{equation*}
and
\begin{equation*}
   T(y)=\mathcal{L}y+o(r) 
\end{equation*}
for some $\mathcal{L}\in GL^+(3,\R)$.
In particular, the null set where $N=|X|$ is contained away from the asymptotically flat end.
\end{lemma}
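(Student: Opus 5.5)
On the asymptotically flat end $\mathcal U$, in the chart $\varphi$, the spinor $\psi$ produced by Proposition \ref{existence} (with $\widehat\nabla\psi=0$) satisfies $\psi-\psi_\infty\in\mathcal X_\delta$ with $\psi_\infty$ constant. We plan to upgrade this to a pointwise estimate, show that $DT$ converges to the constant matrix built from $\psi_\infty$, integrate to get the linear growth of $T$, and finally use $\det\mathcal L>0$ to rule out null points far out.

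\textbf{Step 1: pointwise decay of $\psi-\psi_\infty$.} In an asymptotically flat spin frame, $\widehat\nabla\psi=0$ reads
\[
\partial_i\psi=-\sigma_i\psi,\qquad |\sigma_i|\le Cr^{-1-q},
\]
as in the proof of Lemma \ref{lem:oneend}. First, Gronwall along radial rays gives $|\psi|\le C$ on $\mathcal U$. Hence $|\partial\psi|\le Cr^{-1-q}$, which is integrable in $r$. So $\psi$ has a limit along each radial ray. The weighted Sobolev condition forces this limit to be $\psi_\infty$: the ray limits are continuous in the angle, and a nonzero difference would contradict $\psi-\psi_\infty\in W^{1,2}_\delta$ with $\delta<-1/2$. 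Integrating $\partial_r\psi$ from infinity then gives
\[
|\psi-\psi_\infty|\le Cr^{-q}.
\]
This bootstrap is the delicate point; I would make it rigorous by integrating $\partial_r\psi$ radially and controlling the angular dependence through the uniform bound on $|\partial\psi|$.

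\textbf{Step 2: the differential of $T$.} The fields $Y,\widetilde Y,\widehat Y,N,X$ are quadratic expressions in $\psi$, with coefficients given by Clifford multiplication in the frame. The frame $e_i$ differs from $\partial_i$ by $O(r^{-q})$, since $g-\delta\in C^{2,\alpha}_{-q}$. Together with Step 1 this gives
\[
DT=(Y,\widetilde Y,\widehat Y)=\mathcal L+O(r^{-q}),
\]
where $\mathcal L$ is the constant matrix formed from the corresponding bilinears of $\psi_\infty$ in the Euclidean frame. Likewise $N\to N_\infty$ and $X\to X_\infty$.

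\textbf{Step 3: invertibility and orientation of $\mathcal L$.} By Lemma \ref{T}, which holds for any spinor by pointwise algebra, applied to $\psi_\infty$ with the Euclidean metric,
\[
\det\mathcal L=N_\infty v_\infty^2,\qquad v_\infty^2=N_\infty^2-|X_\infty|^2.
\]
Proposition \ref{X neq 0}(2) gives $v_\infty^2\neq0$, and $N_\infty\ge|X_\infty|$ gives $v_\infty^2>0$. Since $N_\infty=|\psi_\infty|^2>0$, we get $\det\mathcal L>0$, so $\mathcal L\in GL^+(3,\R)$.

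\textbf{Step 4: linear growth of $T$.} Integrate $DT-\mathcal L=O(r^{-q})$ along straight segments in the chart, connecting each point $y$ to a fixed base point through a path of length $O(r)$ that stays in $\mathcal U$. Since $q<1$, this gives
\[
T(y)-\mathcal Ly=O(r^{1-q})+O(1)=o(r).
\]

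\textbf{Step 5: the null set.} By Step 2, $N^2-|X|^2\to v_\infty^2>0$ uniformly as $r\to\infty$. Hence $N>|X|$ outside a compact subset of $\mathcal U$, which is the final assertion.
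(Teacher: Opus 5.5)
Your proposal is correct and follows essentially the same route as the paper: you get the pointwise decay $\psi=\psi_\infty+O(r^{-q})$ from the gradient equation and radial integration, obtain $\mathcal L$ and $\det\mathcal L=N_\infty v_\infty^2>0$ from the algebraic identities of Lemma~\ref{T} together with Proposition~\ref{X neq 0}, and integrate $DT-\mathcal L=O(r^{-q})$ radially to get $T(y)-\mathcal Ly=O(r^{1-q})=o(r)$. Your Step 1 is more careful than the paper, which only says ``boundedness and radial integration give the estimate'': you identify the radial limit with $\psi_\infty$ via the weighted $W^{1,2}_\delta$ condition, and your Step 5 spells out the ``in particular'' claim about the null set, which the paper leaves implicit.
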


\begin{proof}
    Let $y=(y^1,y^2,y^3)$ denote the asymptotically Euclidean coordinate on
the asymptotically flat end and put $r=|y|$.  
$\psi$ satisfies
\[
 \psi=\psi_\infty+O_2(r^{-q}).
\]
Indeed, in an asymptotically flat spin frame
$\partial_i\psi=\sigma_i\psi$ with
$\sigma_i=O(r^{-1-q})$; 
boundedness and radial integration give the estimate.  Proposition \ref{X neq 0} shows  that $|\mathbf{X}_\infty|$ is a nonzero constant.    Let $\mathcal{L}$ be the limiting derivative of the
Tod map in the coordinate system $y$.  Lemma \ref{T} gives
\begin{equation}\label{eq:LGram}
 \mathcal{L}^{\mathsf T}\mathcal{L}=|\mathbf{X}_\infty|^2I+X_\infty\otimes X_\infty,
 \qquad \det \mathcal{L}=N_\infty|\mathbf{X}_\infty|^2>0.
\end{equation}
Thus $\mathcal{L}\in GL^+(3,\R)$ and
\begin{equation}\label{eq:DxAF}
 DT=\mathcal{L}+O(r^{-q}).
\end{equation}
We now integrate this estimate carefully.  Define
\[
 \mathring{T}(y):=T(y)-\mathcal{L}y.
\]
Then $|D\mathring{T}|\leq Cr^{-q}$.  Fix a large $R_0$ and write
$y=r\omega$, $\omega\in S^2$.  Integration along the radial segment gives,
uniformly in $\omega$,
\[
 \mathring{T}(r\omega)=\mathring{T}(R_0\omega)
 +\int_{R_0}^{r}\partial_s\mathring{T}(s\omega)ds,
\]
and hence
\[
 |\mathring{T}(r\omega)|\leq C_0+C\int_{R_0}^{r}s^{-q}\,ds.
\]
It follows that
$
 \mathring{T}(y)=
 o(r).$
Therefore, we have 
\begin{equation} \label{eq:xAF}
   T(y)=\mathcal{L}y+o(r) 
\end{equation}
\end{proof}

\begin{lemma}
    Fix a point $p\in\mathbb R^3$ with $|p|\gg1$. 
    Then $p$ is a regular value and
    \begin{align*}
        \deg(T,p)=\sum_{x\in T^{-1}(p)}\operatorname{sgn} \det DT(x)=1.
    \end{align*}
\end{lemma}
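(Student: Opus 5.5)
The plan is to show that for $|p|$ large, every preimage of $p$ lies deep in the asymptotically flat end. There $DT$ is uniformly close to $\mathcal L$, so each preimage is regular and counts $+1$. Then a linear homotopy on a large coordinate ball shows there is exactly one preimage.

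\emph{Step 1: preimages are confined to the asymptotically flat end.} By Lemma \ref{lem:oneend} there is only one asymptotically flat end. Write $M=K'\cup\mathcal U\cup\bigcup_\ell\mathcal C_\ell$, with $K'$ compact and $\mathcal U=\{r\ge R_1\}$. On each cylindrical end the bilinears $Y,\widetilde Y,\widehat Y$ should decay exponentially in $s$. This is the ODE/spectral-gap behaviour used in Lemma \ref{lem:oneend} and in Proposition \ref{existence}. Consequently $T$ converges to a point $x_\ell$ and $T(\mathcal C_\ell)$ is bounded. I expect this to be the main obstacle: one needs $\psi$ to decay (or at least $|DT|\in L^1(ds)$) along the cylinder. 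I would get this by combining $\psi\in H^1$ on $\mathcal C_\ell$ with the first-order equation $\widehat\nabla\psi=0$. Its limiting tangential operator $A_{\ell,1}$ has a spectral gap, which forces exponential decay. $T(K')$ is bounded by compactness. Hence $T(M\setminus\mathcal U)\subset B_{C}(0)$ for some $C$.

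\emph{Step 2: preimages in the end are regular and positively oriented.} On $\mathcal U$ we have $T(y)=\mathcal L y+o(r)$, so $|T(y)|\ge c|y|-o(|y|)$ with $c=\|\mathcal L^{-1}\|^{-1}>0$. If $|p|\gg1$, every preimage lies in $\mathcal U$ with $r\ge c'|p|$, which is as large as we wish. There $DT=\mathcal L+O(r^{-q})$ is invertible with positive determinant, since $\det\mathcal L>0$ by \eqref{eq:LGram}. So $p$ is a regular value and each preimage contributes $+1$. In particular the sum is finite and positive.

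\emph{Step 3: there is exactly one preimage.} Take $\rho=2|\mathcal L^{-1}p|$, large, and the ball $B_\rho=\{|y|\le\rho\}$ in the end coordinates. Consider the homotopy $H_t(y)=\mathcal L y+t\mathring T(y)$. On $|y|=\rho$,
\[
|H_t(y)-p|\ge |\mathcal L y|-|p|-o(\rho)\ge c\rho-|p|-o(\rho)>0
\]
for $|p|$ large, after adjusting constants via $|p|\le\|\mathcal L\|\,|\mathcal L^{-1}p|$. Homotopy invariance of degree on $B_\rho$ then gives $\deg(T|_{B_\rho},p)=\deg(\mathcal L,p)=1$.

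All preimages of $p$ lie in $\{r\ge c'|p|\}$. Enlarging $\rho$ and shrinking to the annulus if needed, together with Step 1 (no preimages in $K'$, in the cylinders, or in the part of $\mathcal U$ with $r<R_1$), shows that $T^{-1}(p)\subset B_\rho$. So
\[
\sum_{x\in T^{-1}(p)}\operatorname{sgn}\det DT(x)=1.
\]
Since every term is $+1$, $p$ has exactly one preimage.
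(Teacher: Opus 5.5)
Your argument is essentially the paper's: you bound $T$ off the asymptotically flat end using the compact core and the cylindrical limits of Lemma~\ref{lem:cylinderlimit}, use $DT=\mathcal L+O(r^{-q})$ for regularity and positive signs, and compare with $\mathcal L$ to get degree one. The paper does the last step via the boundary degree on $A_{R_0,R}$ rather than your straight-line homotopy, and it gets the cylindrical limits from $L^2$-integrability of $\psi$, a cross-sectional Harnack bound and $\|DT\|_{\mathrm{op}}\le N$, so no exponential decay is needed. Two small repairs: run the homotopy on the annulus $\{R_1\le r\le\rho\}$ rather than a ball, which does not lie in $M$ (on the inner sphere $H_t$ is bounded uniformly in $t$, so it misses $p$ for $|p|$ large); and your bound $|\mathcal Ly|-|p|\ge c\rho-|p|$ need not be positive for $\rho=2|\mathcal L^{-1}p|$ when $\|\mathcal L\|\,\|\mathcal L^{-1}\|\ge 2$, so use $|\mathcal Ly-p|\ge c\,|y-\mathcal L^{-1}p|\ge c\rho/2$ instead.
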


\begin{proof}
Choose $R_0$ sufficiently large so that on the asymptotically flat region
$\{r\ge R_0\}$ we have
\[
    \det DT>0.
\]
This is possible since
\[
    DT=\mathcal{L}+O(r^{-q})
\]
and $\det \mathcal{L}>0$.

Moreover, $T$ is bounded on the complement of $\{r\ge R_0\}$.
Indeed, the interior region has bounded image, while on every
cylindrical end the map $T$ converges to a finite point, as will be proved independently in Lemma~\ref{lem:cylinderlimit}. Hence, there exists $C_0>0$ such that
\[
    |T(x)|\le C_0
    \qquad\text{for every }x\notin\{r\ge R_0\}.
\]
Thus, if $|p|>C_0$, every point of $T^{-1}(p)$ lies in the
asymptotically flat region $\{r>R_0\}$.

It follows that $p$ is a regular value. Indeed, if $x\in T^{-1}(p)$,
then $r(x)>R_0$, and therefore
\[
    \det DT(x)>0.
\]
In particular, every preimage of $p$ has local degree $+1$.

We next compute the degree. Since
\[
    T(y)=\mathcal{L}y+o(r),
\]
we have $|T(y)|\to\infty$ as $r\to\infty$. Hence, for $R$ sufficiently
large,
\[
    T^{-1}(p)\subset \{R_0<r<R\}.
\]
Set
\[
    A_{R_0,R}:=\{R_0<r<R\}.
\]
Since $p\notin T(\partial A_{R_0,R})$, the degree
$\deg(T,A_{R_0,R},p)$ is well-defined.

On the outer boundary $S_R$, consider
\[
    \Phi_{R,p}(\omega)
    :=
    \frac{T(R\omega)-p}{|T(R\omega)-p|}.
\]
Using $T(R\omega)=R\mathcal{L}\omega+o(R)$, we have
\[
    T(R\omega)-p
    =
    R\mathcal{L}\omega+o(R),
\]
uniformly in $\omega\in S^2$. Since $\mathcal{L}$ is invertible, for $R$
sufficiently large the straight-line homotopy between
$T(R\omega)-p$ and $R\mathcal{L}\omega$ does not pass through the origin.
Consequently,
\[
    \deg(\Phi_{R,p})
    =
    \deg\left(
        \omega\mapsto\frac{\mathcal{L}\omega}{|\mathcal{L}\omega|}
    \right)
    =
    \operatorname{sgn}\det \mathcal{L}
    =
    1.
\]

On the inner boundary $S_{R_0}$, choose $|p|$ still larger if
necessary so that
\[
    |p|>\sup_{S_{R_0}}|T|.
\]
Then
\[
    \frac{T-p}{|T-p|}:S_{R_0}\to S^2
\]
is homotopic to the constant map $-p/|p|$, and therefore has degree
zero.

By the boundary characterization of the Brouwer degree,
\[
    \deg(T,A_{R_0,R},p)=1.
\]
Since $A_{R_0,R}$ contains all preimages of $p$, this is precisely
$\deg(T,p)$. Thus
\[
    \deg(T,p)
    =
    \sum_{x\in T^{-1}(p)}
    \operatorname{sgn}\det DT(x)
    =
    1.
\]
\end{proof}

\subsection{Tod map at cylindrical ends}

\begin{lemma}
\label{lem:cylinderlimit}
On every cylindrical end there is a point
$x_\ell\in\R^3$ such that
\begin{equation*}
 \sup_{(s,\omega)\in[t_0,\infty)\times S^2}|T(s,\omega)-x_\ell|
 \longrightarrow0
\end{equation*}
for $t_0\to\infty$.
\end{lemma}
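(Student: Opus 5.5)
The plan is to show that $DT=(Y,\widetilde Y,\widehat Y)$ decays exponentially in $s$ along each cylindrical end $\mathcal C_\ell$. Once this holds, $T$ converges uniformly, since the cross-sections have uniformly bounded diameter $\pi m_\ell+o(1)$.

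By Lemma \ref{T}, the singular values of $DT$ are $v,v,N$ with $v\le N$, so $|DT|\le \sqrt3\,N=\sqrt3|\psi|^2$. It therefore suffices to prove
\[
|\psi(s,\omega)|\le Ce^{-\lambda s}
\]
for some $\lambda>0$, uniformly in $\omega$.

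\textbf{Step 1: $\psi\to0$ in an averaged sense.} Since $\psi=\psi_\infty+\phi$ with $\psi_\infty=0$ near $\mathcal C_\ell$ and $\phi\in\mathcal X_\delta$, the spinor $\psi$ lies in $H^1$ on the cylindrical end.

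\textbf{Step 2: reduce to an ODE in $s$.} Write the equation $\widehat\nabla\psi=0$ in an adapted frame on $[s_0,\infty)\times S^2$, with $e_1=\partial_s$ and $e_2,e_3$ tangent to the spheres. The component $\nabla_s\psi=(-\tfrac12k_{sj}e_je_0+\tfrac12Ee_se_0-\tfrac12Be_s\chi)\psi$ gives
\[
|\partial_s|\psi|^2|\le C_0|\psi|^2, \qquad C_0\approx |\mathfrak e_\ell|+|\mathfrak b_\ell|\le \sqrt2\,m_\ell^{-1},
\]
by Remark \ref{ebm}. The tangential components give $|\nabla^{S^2}|\psi|^2|\le C|\psi|^2$. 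Hence $|\psi|^2$ is comparable on each cross-section, by Harnack along spheres of bounded diameter. Combined with $\psi\in L^2$ of the end and the derivative bound, this gives $\sup_{\{s\}\times S^2}|\psi|\to0$ as $s\to\infty$.

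\textbf{Step 3: exponential decay (main obstacle).} A Gronwall bound alone does not give decay. I would use the spectral gap from the proof of Proposition \ref{existence}. Contract the gradient equation to the Dirac-type equation $\widehat{\mathcal D}\psi=0$ and write it as
\[
\partial_s\psi=-A_{\ell,1}\psi+\epsilon(s)\psi,
\]
where $\epsilon(s)\to0$ in $C^{0}$ by the convergence assumptions in Definition \ref{admissible}. Since $\operatorname{dist}(0,\operatorname{spec}A_{\ell,1})\ge \tfrac1{2m_\ell}$, decompose $\psi(s,\cdot)$ into its positive and negative spectral parts $\psi_\pm$. A standard energy argument (differential inequalities for $\|\psi_\pm\|_{L^2(S^2)}^2$, with $\epsilon$ absorbed for $s$ large) shows:
\begin{itemize}
\item the growing part must vanish, because $\psi\in L^2$ of the end;
\item the decaying part satisfies $\|\psi(s)\|_{L^2(S^2)}\le Ce^{-s/(4m_\ell)}$.
\end{itemize}
This is the delicate step, because the cross-terms between $\psi_+$ and $\psi_-$ must be controlled by the smallness of $\epsilon$. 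Alternatively, I would cite the exponential decay for $L^2$ kernel elements in Lockhart--McOwen \cite{LockhartMcOwen}.

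\textbf{Step 4: conclude.} Upgrade the $L^2$ decay to pointwise decay of $|\psi|$ using the Harnack comparability from Step 2. Then $|DT|\le Ce^{-\lambda s}$. For points $(s,\omega)$ and $(s',\omega')$ with $s,s'\ge t_0$, join them by a radial segment and a geodesic on a cross-section. This gives
\[
|T(s,\omega)-T(s',\omega')|\le C\int_{t_0}^\infty e^{-\lambda\sigma}\,d\sigma+C e^{-\lambda t_0}\to0.
\]
So $T$ is uniformly Cauchy on the end, and its limit defines $x_\ell$.
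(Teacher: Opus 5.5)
Your argument is correct, but it takes a heavier route than the paper: Step 3, which you identify as the main obstacle, is not needed.

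\textbf{The paper's route.} The paper never proves exponential decay. Since $\|DT\|_{\rm op}\le N=|\psi|^2$ is \emph{quadratic} in $\psi$, it suffices that $n_\ell(s):=\sup_{\{s\}\times S^2}N$ be integrable in $s$. This follows at once from three facts:
\begin{enumerate}
\item $\psi\in L^2$ of the end;
\item the cross-sectional Harnack comparison $n_\ell(s)\le C\int_{\{s\}\times S^2}N\,dA$, which you already derive in Step 2;
\item Fubini.
\end{enumerate}
Integrating $|DT|\le N$ along a longitudinal ray then makes $T(s,w_0)$ Cauchy. The oscillation of $T$ over a cross-section is at most $C\,n_\ell(s)\to0$, which is also already in your Step 2. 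Your remark that Gronwall gives no decay is true but beside the point. What is needed is integrability of $\sup_{S^2}|DT|$ in $s$, not decay of $|\psi|$.

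\textbf{What your route buys and costs.} You gain a quantitative rate $|T-x_\ell|\le Ce^{-\lambda t_0}$. The cost is the spectral-gap machinery. First, you must convert the first-order metric error terms in the Dirac operator into zeroth-order ones via $|\nabla\psi|\le C|\psi|$. Second, you must control the $\psi_\pm$ cross terms. Under the perturbation, the growing part is only shown to be relatively small, not literally zero.

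\textbf{A cheaper path to exponential decay.} Contracting the gradient equation to $\widehat{\mathcal D}\psi=0$ discards information. The full equation $\widehat\nabla\psi=0$ gives the pointwise ODE $\partial_s\psi+Q\psi=\mathcal R\psi$ with $Q^2=\mathfrak q^2$. Analysed as in Lemma~\ref{lem:cylindrical}, this yields exponential decay more cheaply than the $L^2(S^2)$ spectral decomposition of $A_{\ell,1}$.
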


\begin{proof}
Write $S^2_{\ell,s}=\{s\}\times S^2$ and let $g_\ell(s)$ be its
induced metric.  Put
\[
 n_\ell(s)=\sup_{b\in S^2}N(s,b),
 \qquad N=|\psi|^2.
\]
  Note that $\widehat{\nabla}\psi=0$ implies
$|\nabla \psi|\leq C|\psi|$.  Since $\psi$ never vanishes, integration of
$|\nabla\log|\psi||\leq C$ along cross-sectional curves, together with the
diameter and volume bounds, gives a uniform comparison estimate
\begin{equation}\label{eq:crosscompare}
 n_\ell(s)\leq C\int_{S^2_{\ell,s}}N\,dA_{g_\ell(s)}.
\end{equation}
Proposition~\ref{existence} gives
$\psi\in L^2(\mathcal C_a)$ on the cylindrical end, therefore, Fubini implies
\begin{equation}\label{eq:nintegrable}
 \int_0^\infty n_\ell(s)\,ds<\infty.
\end{equation}
The same parallel-transport comparison in the longitudinal direction
shows, for $s$ large,
\begin{equation}\label{eq:nlocalaverage}
 n_\ell(s)\leq C\int_{s-1}^{s+1}n_\ell(t)\,dt.
\end{equation}
Hence $n_\ell(s)\to0$.
For a unit vector $Z$, Lemma \ref{T} gives
\begin{equation}\label{eq:Dxbound}
 |DT(Z)|^2=v^2+\langle X,Z\rangle^2
 \leq v^2+|X|^2=N^2.
\end{equation}
Thus $\|DT\|_{\rm op}\leq N$.  Fix $w_0\in S^2$.  Along the
longitudinal curve,
\[
 |T(s_2,w_0)-T(s_1,w_0)|
 \leq C\int_{s_1}^{s_2}n_\ell(s)\,ds.
\]
Equation \eqref{eq:nintegrable} makes $T(s,w_0)$ Cauchy.
We denote its limit $x_\ell$.  A cross-sectional path of uniformly bounded length gives
\[
 \sup_{w\in S^2}|T(s,w)-T(s,w_0)|\leq Cn_\ell(s).
\]
Finally, \eqref{eq:nlocalaverage} bounds the supremum of the last term on
the entire tail by a constant times
$\int_{t_0-1}^\infty n_\ell ds$.  This completes the proof.
\end{proof}

In the next lemma, we will show that the null set $\{N=|X|\}$ stays away from the cylindrical ends.
See also Theorem 3.2 in \cite{CRT}.

\begin{lemma} \label{lem:cylindrical}
    On each cylindrical end,  
    \[\frac{|X|}{N}\to 0,\quad \text{as}\quad s\to \infty.\]
\end{lemma}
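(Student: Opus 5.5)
We argue by a blow-down (translation) limit along the cylindrical end combined with the algebraic identities for the spinor bilinears. Write $\mathcal C_\ell\cong(0,\infty)\times S^2$. The spinor $\psi$ never vanishes, so we may normalize $\phi:=\psi/|\psi|$. Since $\widehat\nabla\psi=0$, we have $|\nabla\psi|\le C|\psi|$. Differentiating the equation once more and using the $C^{1,\alpha}$ convergence of $(g,k,E,B)$ to the model data $(ds^2+m_\ell^2g_{S^2},0,\mathfrak e_\ell\partial_s,\mathfrak b_\ell\partial_s)$, we get uniform $C^{1,\alpha}$ bounds for $\phi$ on unit slabs $[s-1,s+1]\times S^2$. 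Moreover, $\phi$ satisfies
\[
\nabla_i\phi=\Bigl(A_i-\tfrac12\operatorname{Re}\langle A_i\phi,\phi\rangle\Bigr)\phi,
\]
where $A_i$ is the right-hand side operator in \eqref{eq:supercov}. The quantity $|X|/N=|\langle e_ie_0\phi,\phi\rangle|$ depends only on $\phi$. It therefore suffices to show that for every sequence $s_n\to\infty$, the translates $\phi_n(s,\omega):=\phi(s+s_n,\omega)$ subconverge to a limit $\phi_\infty$ on $\R\times S^2$ with $\langle e_ie_0\phi_\infty,\phi_\infty\rangle\equiv0$.

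By Arzel\`a--Ascoli, the $\phi_n$ converge in $C^1_{loc}$ to a unit-norm spinor $\phi_\infty$ on the model cylinder $\R\times S^2_{m_\ell}$ with $k=0$, $E=\mathfrak e_\ell\partial_s$, $B=\mathfrak b_\ell\partial_s$. The rescaled spinors $\psi_n:=\psi(\cdot+s_n)/\psi(s_n,\omega_0)$ are locally bounded above and below by the Gronwall comparison used in Lemma~\ref{lem:cylinderlimit}. Hence they too subconverge, to a nonzero solution $\psi_\infty$ of the model equation
\[
\nabla_i\psi_\infty=\Bigl(\tfrac12\mathfrak e_\ell\,\partial_s e_ie_0-\tfrac12\mathfrak b_\ell\,\partial_s e_i\chi\Bigr)\psi_\infty,
\]
and $\phi_\infty=\psi_\infty/|\psi_\infty|$. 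Using the formulas of the bilinear-derivative theorem with $k=0$, the pair $(N_\infty,X_\infty)$ is a Killing field of the model Killing development, which is Bertotti--Robinson, i.e.\ $AdS_2\times S^2$, with $\mathfrak e_\ell^2+\mathfrak b_\ell^2=m_\ell^{-2}$ by Remark~\ref{ebm}. We now classify these limits directly. On the $S^2$ factor the equation restricts to a Killing-spinor-type equation with $|$constant$|=\tfrac{1}{2}m_\ell^{-1}$, which matches the round sphere of radius $m_\ell$. Its solutions have $X_\infty$ tangent to $S^2$ equal to zero; equivalently, $\nabla_iX_j=K(\ast E)_{ij}+L(\ast B)_{ij}$ forces $X$ to be a multiple of the rotation generated by $\ast\partial_s$, and a global smooth vector field on $S^2$ of this type compatible with the $s$-equation must vanish.

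The $\partial_s$ component requires a separate argument. From $\nabla_sN=-LE_s+KB_s$, and since $N$ decays exponentially in the limit direction because $\int n_\ell<\infty$, the limit $\psi_\infty$ is the decaying mode $e^{-s/(2m_\ell)}$-type. For these modes the algebraic relation between the radial component $X_s$ and $(K,L)$ gives $X_s=0$. Concretely, we would decompose $\psi_\infty$ into eigenspinors of the spherical Dirac operator with the lowest eigenvalue $\pm m_\ell^{-1}$ and check from the explicit solution that $\langle e_se_0\psi_\infty,\psi_\infty\rangle=0$ and $N_\infty^2=K_\infty^2+L_\infty^2$.

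The main obstacle is this classification of the limiting model solutions. In particular, we must exclude a limit that grows in $s$ (the growing mode would have $X\ne0$, namely the null AdS Killing fields), since the normalization $\phi$ discards the information on growth. We handle this with the integrability $\int_0^\infty n_\ell\,ds<\infty$ from Lemma~\ref{lem:cylinderlimit}, together with \eqref{eq:nlocalaverage}. These show that along the end $\log n_\ell$ has asymptotic slope at most $-1/(2m_\ell)$, which selects the decaying mode in every blow-down limit. Once every subsequential limit has $X_\infty=0$, we conclude that $|X|/N\to0$.
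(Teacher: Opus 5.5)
\textbf{There is a genuine gap in the mode-selection step.} The claim that every translation limit is the pure decaying mode carries the whole argument, and it is not proved.

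You derive it from $\int_0^\infty n_\ell\,ds<\infty$ and \eqref{eq:nlocalaverage}, but these two facts contain no decay rate: $n(s)=s^{-2}$ satisfies both. There is also a factor-of-two slip: since $N=|\psi|^2$, the decaying mode has $\log N$ slope $-1/m_\ell$, and a bound of $-1/(2m_\ell)$ would not exclude mixtures anyway.

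More fundamentally, a translation limit only sees windows of bounded length. On such a window, a limit with $\int_{S^2}N_\infty=a e^{-2\mathfrak q\sigma}+b e^{2\mathfrak q\sigma}$, where $a,b>0$ and $\mathfrak q=\frac{1}{2m_\ell}$, is perfectly compatible with integrability of $n_\ell$. To exclude it you must show that a non-negligible growing component persists and grows on the entire tail. This global-in-$s$ propagation is the real content of the lemma, and it is missing.

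The paper supplies it as follows. Split $\psi=\psi_++\psi_-$ into the $\pm\mathfrak q$-eigenspaces of $Q=\frac12(\mathfrak e_\ell e_0-\mathfrak b_\ell\chi)$. Then $\kappa=|\psi_+|^2-|\psi_-|^2$ satisfies $\kappa'\le-2(\mathfrak q-4\varepsilon(s))|\psi|^2$, where $\varepsilon(s)\to0$ bounds the error terms. Because $|\psi|^2\in L^1$, one has $\kappa(T_j)\to0$ along a sequence. Integrating over $[s,\infty)$ then gives $|\psi_+|^2\ge2(\mathfrak q-4\varepsilon(s))\int_s^\infty|\psi|^2$ and $|\psi_-|^2\le4\varepsilon(s)\int_s^\infty|\psi|^2$.

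If you prefer to stay within your blow-down framework, an equivalent repair is available. Use $C^2_{loc}$ convergence to model solutions to show that $\tau:=F'/(2\mathfrak q)$, with $F:=\log\int_{S^2}N$, satisfies $\tau'=2\mathfrak q(1-\tau^2)+o(1)$. If $\tau\ge-1+\delta$ at some large $s$, then $\tau$ is driven toward $1$ and $F\to+\infty$, contradicting $n_\ell\to0$. Hence $\tau\to-1$, and every limit has $b=0$.

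\textbf{Your classification of the model solutions is also incorrect as stated.} The model solutions are $\psi_\infty=e^{-\mathfrak q s}\psi^0_++e^{\mathfrak q s}\psi^0_-$, where $\psi^0_\pm$ are Killing spinors of the round sphere lying in the $\pm\mathfrak q$-eigenspaces of $Q$.

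Since $e_ie_0$ anticommutes with $Q$, one gets $X_\infty=2\operatorname{Re}\langle e_ie_0\psi^0_+,\psi^0_-\rangle$. This is independent of $s$ and in general nonzero. Realizing $\psi^0_\pm$ through constant spinors of $\mathbb R^3$ restricted to the sphere, its tangential part can be a nontrivial rotation Killing field of $S^2$, and its $\partial_s$-component can also be nonzero. So no sphere argument forces $X_\infty^\top=0$. Conversely, a pure growing mode has $X_\infty=0$.

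What must be excluded is therefore the coexistence of the two components, not the growing mode as such. The same anticommutation gives $|X|\le2|\psi_+||\psi_-|$ directly for the actual $\psi$. Once $|\psi_-|/|\psi_+|\to0$ is established, both the blow-down and the sphere analysis become unnecessary.
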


\begin{proof}
Let 
\[Q:=\frac{1}{2}(\mathfrak{e}_\ell e_0-\mathfrak{b}_\ell\chi),\qquad \mathfrak{q}:=\frac{1}{2}(\mathfrak{e}_\ell^2+\mathfrak{b}^2_\ell)^\frac{1}{2}.
\]
Then  $Q^*=Q$, $Q^2=\mathfrak q^2I$.
    The gradient equation $\widehat{\nabla}\psi=0$ and the decay assumption in definition \ref{admissible} imply 
    \begin{align}
       \partial_s\psi+Q\psi
   &=\mathcal R\psi, \label{s ODE}                        \\
 \nabla_A^{\sigma}\psi-e_se_AQ\psi
   &=\mathcal R_A\psi, \label{A ODE}
   \end{align}         
   where $\mathcal R$   and $\mathcal R_A$ are remainder terms satisfying
   \begin{equation*}
        \lim_{s\to \infty}\sup_{\omega\in S^2}
 \bigl(|\mathcal R|+|\mathcal R_A|\bigr)=0. 
   \end{equation*}
Then equation \eqref{A ODE} implies that there exists $C>0$ such that
\[\sup_{\omega\in S^2}|\psi(s,\omega)|\le C\inf_{\omega\in S^2}|\psi(s,\omega)|.\]
Since $\psi$ is $L^2$ integrable in the cylindrical end, we have 
\begin{equation} \label{L2s}
  \int_0^\infty |\psi(s,\omega)|^2ds<\infty.  
\end{equation}
Let $Q_\pm:=\frac{1}{2}(I\pm \mathfrak{q}^{-1}Q)$, $\psi_\pm:=Q_\pm \psi$. Applying $Q_\pm$ to equation \eqref{s ODE} and using $Q\psi_\pm=\pm\mathfrak{q}\psi_\pm$, we have
\begin{equation}\label{pm ODE}
\partial_s \psi_\pm \pm\mathfrak{q}\psi_\pm= Q_\pm\mathcal{R}\psi.
\end{equation}

Fix $\omega\in S^2$. Denote $\upsilon_+(s)=|\psi_+(s,\omega)|^2$ and $\upsilon_-(s)=|\psi_-(s,\omega)|^2$. For $s>0$, define $\varepsilon(s):=\sup_{t\ge s}\|\mathcal{R}\|$, then $\varepsilon(s)\to 0$. From equation \eqref{pm ODE}, we have 
\begin{equation} \label{uv ode}
    |\upsilon_+'+2\mathfrak{q}\upsilon_+|\le 4\varepsilon(s)(\upsilon_++\upsilon_-),\quad |\upsilon_-'-2\mathfrak{q}\upsilon_-|\le 4\varepsilon(s)(\upsilon_++\upsilon_-).
\end{equation}
Let $\kappa:=\upsilon_+-\upsilon_-$ and $\zeta:=\upsilon_+ +\upsilon_- $, then
\begin{equation}\label{eq:D-inequality}
    \kappa'\le -2(\mathfrak{q}-4\varepsilon(s))\zeta.
\end{equation}
Because $\zeta=\upsilon_+ +\upsilon_-=|\psi(s)|^2\in L^1(0,\infty)$, there is a sequence $T_j\to \infty$ such that $\zeta(T_j)\to 0$, and hence $\kappa(T_j)\to 0$. Choose \(s_0>0\) such that \(\varepsilon(s_0)<\frac{1}{4}\mathfrak{q}\). For \(s\ge s_0\),
integrating \eqref{eq:D-inequality} from \(s\) to \(T_j\), and using the fact that $\varepsilon$ is nonincreasing  gives
\[
    \kappa(T_j)-\kappa(s)
    \le
    -2\bigl(\mathfrak{q}-4\varepsilon(s)\bigr)
    \int_s^{T_j}\zeta(t)\,dt.
\]
Letting \(j\to\infty\), we obtain
\begin{equation} 
    \kappa(s)
    \ge
    2\bigl(\mathfrak{q}-4\varepsilon(s)\bigr)
    \int_s^\infty \zeta(t)\,dt.
    \label{eq:D-positive}
\end{equation}
The integral on the right-hand side is strictly positive: otherwise
\(\zeta\equiv0\) on \([s,\infty)\), contradicting the assumption
\(|\psi|\neq0\). Thus
\[
    \kappa(s)>0,
    \qquad s\ge s_0.
\]
Consequently,
\[
    0\le\upsilon_-(s)<\upsilon_+(s),
    \qquad s\ge s_0.
\]

From the second equation in \eqref{uv ode},
we obtain
\[
    \upsilon_-'-2\mathfrak q\upsilon_-
    \ge -4\varepsilon(s)\zeta.
\]
Consequently,
\[
    \left(e^{-2\mathfrak q s}\upsilon_-(s)\right)'
    \ge
    -4e^{-2\mathfrak q s}\varepsilon(s)\zeta.
\]
Since $\zeta(T_j)\to 0$, we have \(\upsilon_-(T_j)\to0\). Integrating the preceding inequality
from \(s\) to \(T_j\), and then letting \(j\to\infty\), gives
\begin{align}
    \upsilon_-(s)
    &\le
    4\int_s^\infty
    e^{-2\mathfrak q(t-s)}
    \varepsilon(t)\zeta\,dt \notag\\
    &\le
    4\varepsilon(s)
    \int_s^\infty \zeta\,dt,
    \label{eq:v-tail-estimate}
\end{align}
where we used the fact that \(\varepsilon\) is nonincreasing.

We may therefore define
\[
    \eta(s):=\frac{\upsilon_-(s)}{\upsilon_+(s)}.
\]
Combining \eqref{eq:v-tail-estimate} and
\eqref{eq:D-positive}, and using \(\upsilon_+\ge\kappa\), we obtain
\[
    0\le\eta(s)
    \le
    \frac{2\varepsilon(s)}
         {\mathfrak q-4\varepsilon(s)}.
\]
Since \(\varepsilon(s)\to0\), it follows that
\[
    \lim_{s\to\infty}\eta(s)=0,
\qquad \text{i.e.}\quad
    \frac{|\psi_-(s,\omega)|}
         {|\psi_+(s,\omega)|}
    =\sqrt{\eta(s)}
    \longrightarrow0.
\]

Finally, since \(Q\) anticommutes with \(Y_0e_0\) for every unit
tangent vector \(Y_0\),
\[
    \langle X,Y_0\rangle
    =
    2\operatorname{Re}
    \left\langle Y_0e_0\psi_+,\psi_-\right\rangle.
\]
Therefore,
\[
    |X|\le2|\psi_+||\psi_-|=2\sqrt{\upsilon_+\upsilon_-}.
\]
Since \(N=|\psi|^2=\upsilon_+ +\upsilon_-\), we conclude that
\[
    \frac{|X|}{N}
    \le
    \frac{2\sqrt{\eta}}{1+\eta}
    \longrightarrow0
    \qquad\text{as }s\to\infty.
\]
\end{proof}

\subsection{Degree computation} Let
\[
    \mathcal P:=\{x_a:1\le a\le \ell\}\subset\mathbb R^3.
\]
For $y\in \mathbb{R}^3\setminus\mathcal{P}$, we denote 
$\deg_{\mathcal P}(T,y)=\deg (T|_{M\setminus T^{-1}(\mathcal P)},y)$. 
\begin{proposition}
\label{prop:degreeone}
For every compact set
$\mathcal C\subset\mathbb R^3\setminus\mathcal P$,
the set $T^{-1}(\mathcal C)$ is compact. Moreover,
\[
    \deg_{\mathcal P}(T,y)=1
    \qquad\text{for every }
    y\in\mathbb R^3\setminus\mathcal P.
\]
In particular, if $y$ is a regular value, then $T^{-1}(y)$
consists of exactly one point.
\end{proposition}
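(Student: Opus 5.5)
We prove properness first, then transport the degree computed near infinity to every point of $\mathbb R^3\setminus\mathcal P$.

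\emph{Properness.} Fix a compact $\mathcal C\subset\mathbb R^3\setminus\mathcal P$ and put $d:=\operatorname{dist}(\mathcal C,\mathcal P)>0$. Decompose $M=K\cup\{r\ge R_0\}\cup\bigcup_\ell\mathcal C_\ell$, where $K$ is compact.
\begin{enumerate}[label=(\roman*)]
\item On the asymptotically flat end, $T(y)=\mathcal L y+o(r)$ with $\mathcal L$ invertible. Hence $|T|\to\infty$, so $T^{-1}(\mathcal C)\cap\{r\ge R_0\}$ lies in a bounded set $\{r\le R_1\}$.
\item On each cylindrical end, Lemma~\ref{lem:cylinderlimit} shows that $T$ converges uniformly to $x_\ell$ on the tail $[t_0,\infty)\times S^2$. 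Choose $t_0$ so that $|T-x_\ell|<d/2$ on the tail; then $T^{-1}(\mathcal C)$ misses the tail.
\end{enumerate}
Thus $T^{-1}(\mathcal C)$ is a closed subset of a compact set, hence compact. Applied to $\mathcal C=\{y\}$, this also shows that $T^{-1}(\mathcal P)$ is closed in $M$. So $M':=M\setminus T^{-1}(\mathcal P)$ is an open manifold, and $T|_{M'}:M'\to\mathbb R^3\setminus\mathcal P$ is a proper smooth map between oriented $3$-manifolds.

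\emph{Degree.} For a proper map, the Brouwer degree $\deg_{\mathcal P}(T,y)$ is well defined and locally constant in $y$. This is the standard homotopy and excision argument: for nearby $y,y'$, take a small ball around the segment in $\mathbb R^3\setminus\mathcal P$; its preimage is compact, and the degree is invariant under moving the target point along a path that avoids $T(\partial U)$ for a suitable relatively compact neighborhood $U$ of the preimage. Since $\mathcal P$ is finite, $\mathbb R^3\setminus\mathcal P$ is connected, so $\deg_{\mathcal P}(T,\cdot)$ is constant. By the preceding lemma it equals $1$ at points with $|p|\gg1$. Those points avoid $\mathcal P$, and every preimage lies in the asymptotically flat region, where $T^{-1}(\mathcal P)$ is absent. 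Hence $\deg_{\mathcal P}(T,y)=1$ for every $y\in\mathbb R^3\setminus\mathcal P$.

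\emph{Regular values.} By Lemma~\ref{T}, $\det DT=Nv^2\ge0$ with respect to the orientations. At a regular value every preimage therefore contributes $+1$. The preimage is finite by properness, so it consists of exactly $\deg_{\mathcal P}(T,y)=1$ point.

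The main obstacle is making the degree theory rigorous on the noncompact domain $M'$. I would handle it by working with compact sublevel sets. Fix a path $\gamma$ in $\mathbb R^3\setminus\mathcal P$ from a regular value near infinity to $y$, and let $\mathcal N$ be a compact tubular neighborhood of $\gamma$. Then $T^{-1}(\mathcal N)$ is compact by the properness step. Choose a compact domain $U\supset T^{-1}(\mathcal N)$ with smooth boundary. The path $\gamma$ avoids $T(\partial U)$, so the classical homotopy invariance on $U$ applies. Outside $U$ there are no preimages of points of $\gamma$, so the degree on $U$ agrees with $\deg_{\mathcal P}$ along the whole path.
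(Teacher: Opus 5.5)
Your proposal is correct and follows the paper's proof: properness from $T(y)=\mathcal L y+o(r)$ on the asymptotically flat end and the uniform cylindrical limits of Lemma~\ref{lem:cylinderlimit}, constancy of the degree of a proper map on the connected set $\mathbb R^3\setminus\mathcal P$ normalized to $1$ by the preceding asymptotically flat lemma, and $\det DT=Nv^2>0$ at regular preimages. Your path-and-compact-domain argument spells out the degree invariance that the paper simply cites from Outerelo--Ruiz; one small point is that closedness of $T^{-1}(\mathcal P)$ follows from continuity alone and does not need properness.
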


\begin{proof}
The asymptotic relation
\[
    T(x)=\mathcal{L}x+o(|x|),\qquad L\in GL^+(3,\mathbb R),
\]
implies that $|T(x)|\to\infty$ along the asymptotically flat end.
On the $a$-th cylindrical end, $T$ converges uniformly by Lemma \ref{lem:cylinderlimit}. Therefore,
$
 T:M\setminus T^{-1}(\mathcal P)
   \longrightarrow\mathbb R^3\setminus\mathcal P
$
is proper.

By the preceding asymptotically flat degree computation, there exists
$y_0\in\mathbb R^3\setminus\mathcal P$ such that
$
    \deg_{\mathcal P}(T,y_0)=1.$ 
Thus, the invariance
of the degree of a proper map gives 
$
    \deg_{\mathcal P}(T,y)=1$,
for every $y\in\mathbb R^3\setminus\mathcal P$,  see
\cite[p.~99]{OutereloRuiz}.

If $y$ is a regular value, then $T^{-1}(y)$ is finite. Furthermore,
\[
    \det DT=Nv^2>0
\]
at every point of $T^{-1}(y)$, so every preimage has local degree
$+1$. Hence
\[
    \#T^{-1}(y)=\deg_{\mathcal P}(T,y)=1.
\]
\end{proof}

\section{Structure of the null set}
Set
\[
    \mathcal V:=L+\mathrm iK,
    \qquad
    v:=|\mathcal V|
          =\sqrt{N^2-|X|^2}
          , 
           \qquad   Z_M:=\{v=0\}. 
\]
    The goal of this section is to prove that $Z_M$ is empty. 
    
     To the contrary, suppose that $Z_M$ is nonempty.
Proposition~\ref{X neq 0}
 shows that $v>0$ sufficiently far
out on the asymptotically flat end. Moreover, 
since $N \neq 0$ everywhere and Lemma~\ref{lem:cylindrical} gives 
$v\neq 0$ at the cylindrical end. Consequently, $Z_M$ is compact.

Here is the definition of the relative degree which we will use later, see \cite[p.~145]{OutereloRuiz}.
\begin{definition} Let $\Omega\subset M$ be an open subset. For a regular value $x\notin T(\partial\Omega)$, the relative degree
$\deg(T,\Omega,x)\in\mathbb Z$ is defined by
    \[
    \deg(T,\Omega,x)
    =
    \sum_{p\in T^{-1}(x)\cap\Omega}
    \operatorname{sgn}\det(DT)_p.
\]
For a critical value $x$, the degree is defined using a nearby regular value in the same component of $\mathbb{R}^3\setminus T(\partial\Omega)$.
\end{definition}

\subsection{Local structure}

\begin{theorem}\label{local plaque}
For every $p\in Z_M$, there exist a neighborhood $U$ of $p$ and a $C^2$
embedded disk $\Sigma_0\subset U$ such that
\[
    p\in\Sigma_0
    \subset Z_M\cap T^{-1}(T(p)).
\]
\end{theorem}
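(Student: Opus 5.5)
The plan is to construct $\Sigma_0$ as a regular level surface of one component of the Tod map, and then show that $v$ vanishes on it. Fix $p\in Z_M$. Since $N>0$ everywhere and $N=|X|$ at $p$, we have $X(p)\neq0$. By Lemma \ref{T}, $T^\ast\delta=v^2g+X^\flat\otimes X^\flat$, so at $p$ the differential $DT_p$ has rank one, with kernel $X(p)^\perp$ and image spanned by the unit vector $\nu:=DT_p(X(p))/|DT_p(X(p))|$. Set $f:=\langle T-T(p),\nu\rangle$. Then $df_p=c\,X^\flat(p)$ with $c\neq0$. Since $DT=(Y,\widetilde Y,\widehat Y)$ is $C^2$ by the regularity of $\psi$, the implicit function theorem gives a $C^2$ embedded disk $\Sigma_0=\{f=0\}\cap U$ through $p$, tangent to $X(p)^\perp$ at $p$. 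It then suffices to show two things on $\Sigma_0$, after shrinking $U$: that $\mathcal V=L+\mathrm iK$ vanishes, and that the transverse components $\langle T-T(p),\nu'\rangle$, $\nu'\perp\nu$, vanish.

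For the first point, I would differentiate $L=\langle e_0\psi,\psi\rangle$ and $K=\langle\chi\psi,\psi\rangle$ using the super-covariant equation \eqref{eq:supercov}, in the same way the gradient formulas for $N,X,R,Y$ were obtained. I expect a formula of the form
\[
\nabla_i\mathcal V=\mathcal A_i\,\mathcal V+\mathcal B_i\,\overline{\mathcal V}+\beta\,X_i ,
\]
where $\mathcal A_i,\mathcal B_i$ are built from $k,E,B$ and $\beta$ is a complex bilinear of $\psi$ paired with $E\pm\mathrm iB$. The algebraic identities of the theorem on pairwise perpendicular bilinears ($N^2-|X|^2=K^2+L^2$ and the orthogonality relations) should show that the inhomogeneous part is proportional to $X^\flat$. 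Granting this, for every $Z\perp X$ the derivative $\nabla_Z\mathcal V$ is real-linear in $\mathcal V$. Gronwall along curves in $\Sigma_0$ starting at $p$ would then propagate $\mathcal V=0$ over $\Sigma_0$, provided those curves stay perpendicular to $X$.

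\textbf{Main obstacle.} This is where $\Sigma_0$ must really be an integral surface of $X^\perp$, not merely tangent to it at $p$. My first attempt is to define $\Sigma_0$ differently. The antisymmetric part of $\nabla X$ is $K\ast E+L\ast B$, which vanishes on $Z_M$. So along any surface on which $\mathcal V=0$, the distribution $X^\perp$ is Frobenius-integrable. I would therefore run a coupled argument: set up the first-order system for a graph $u$ over $X(p)^\perp$ whose tangent planes are $X^\perp$, together with the transported value of $\mathcal V$. Its solution with $\mathcal V\equiv0$ is consistent, because the compatibility condition, integrability of $X^\perp$, holds exactly when $\mathcal V=0$. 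Existence of a $C^2$ solution would follow from Frobenius applied to the restricted system. Uniqueness from the linear-in-$\mathcal V$ structure would then identify this surface with the zero set through $p$.

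Once $\Sigma_0\subset Z_M$ is tangent to $X^\perp$ everywhere, Lemma \ref{T} with $v=0$ gives $DT|_{T\Sigma_0}=0$, because there $|DT(Z)|^2=\langle X,Z\rangle^2=0$. Hence $T$ is constant on the connected disk $\Sigma_0$, and $\Sigma_0\subset T^{-1}(T(p))$.
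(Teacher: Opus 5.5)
\textbf{There is a genuine gap at the step you expected the algebra to supply.} The gradient $\nabla\mathcal V$ contains no term linear in $\mathcal V$, and its inhomogeneous part is not proportional to $X^\flat$. Differentiate $L$ and $K$ using \eqref{eq:supercov}. The $k$-terms drop out because $e_0e_je_0=-e_j$ and $\chi e_je_0$ are skew-adjoint, and the same computation reproduces the paper's formula for $\nabla N$. One obtains
\[
\nabla L=-NE+B\times X,\qquad \nabla K=NB+E\times X,\qquad \nabla\mathcal V=-N\mathcal F+\mathrm i\,\mathcal F\times X,\quad \mathcal F:=E-\mathrm iB .
\]
At $p\in Z_M$ write $X=N\nu$. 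For $Z\perp\nu$ this gives $\nabla_Z\mathcal V=-N\langle\mathcal F,Z-\mathrm i\,\nu\times Z\rangle$, with the complex-bilinear pairing. This vanishes for all such $Z$ only if $E-\langle E,\nu\rangle\nu=-\nu\times B$ at $p$, and nothing local forces that. When it fails, $d\mathcal V_p$ maps $\nu^\perp$ complex-linearly onto $\mathbb C$. Then near $p$ the set $Z_M=\{\mathcal V=0\}$ is a curve transverse to $\nu^\perp$, so no surface through $p$ carries $\mathcal V=0$, and your Gronwall propagation fails at its first step. The Frobenius step does not rescue this. Since $dX^\flat=2(K\ast E+L\ast B)$ vanishes only on $Z_M$, it gives no integrable distribution on an open set, and it presupposes the surface it is meant to produce.

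\textbf{No purely local argument can work.} Consider the Einstein--Maxwell (Kowalski-Glikman) plane wave $2\,dt\,dx_3-|z|^2dt^2+|dz|^2$. Up to constants, it is the IWP metric with $H\propto(x_1+\mathrm ix_2)^{-1/2}$, written in the coordinate $z$ with $z^2=x_1+\mathrm ix_2$; the $dx_3^2$ terms cancel and the metric is smooth across $z=0$. The Killing spinor with Dirac current $\partial_t$ is timelike for $z\neq0$ and null on $\{z=0\}$. So on a spacelike slice, $Z_M$ is a curve and the Tod map is, up to constants, $(z,x_3)\mapsto(z^2,x_3)$, a branched double cover. The theorem's conclusion therefore fails locally, and the missing ingredient is global.

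\textbf{How the paper supplies the global input.} It works on the same surface $\Sigma_0=\{s=0\}$ as your $\{f=0\}$, but does not try to show $\mathcal V=0$ there directly.
\begin{enumerate}
\item It shows that $w_0=P_{\ell^\perp}T|_{\Sigma_0}$ is either constant or $\mathcal K$-quasiregular with $\mathcal K\le\tfrac32$ and $Dw_0(p)=0$.
\item By Stoilow factorization and Mori's estimate, a nonconstant $w_0$ then has local index at least $2$, since $\mathcal K<2$.
\item A homotopy of $T$ to the product map identifies this index with the local degree of $T$ at $T(p)$.
\item That contradicts $\deg_{\mathcal P}(T,\cdot)=1$ from Proposition~\ref{prop:degreeone} together with $\det DT\ge0$. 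The degree-one statement rests on the asymptotically flat and cylindrical asymptotics.
\end{enumerate}
Once $w_0$ is constant, $T\equiv T(p)$ on $\Sigma_0$. The resulting two-dimensional kernel of $DT$ forces $v=0$ there, so $\Sigma_0\subset Z_M$ follows automatically; this is the converse of your last paragraph.
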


\begin{proof}
    At every interior point of $Z_M$, Lemma~\ref{T} shows that the singular values of
$DT$ are $0,0,N$. Since $N>0$, the rank of $DT$ is one there.  Suppose that $p\in\operatorname{Int}_M Z_M$. After restricting to a
neighborhood contained in $Z_M$, the constant-rank theorem gives a smooth
two-dimensional disk through $p$ on which $T$ is constant.

It remains to consider $p\in\partial Z_M$. 
In a small neighborhood of $p$, define
\[
 e:=\frac{X}{|X|},
 \qquad
 u:=\frac{DT(e)}{N}.
\]
The identity in Lemma~\ref{T} implies that $|u|=1$, and that $DT$ maps $e^\perp$ conformally with factor $v$ into $u^\perp$. 
Fix the constant target vector
$
    \ell:=u(p)$
and define
$$
    s=\ell\cdot T,
    \qquad
    w=P_{\ell^\perp}T.
$$
After translating $T$ in the target, we may assume that
\[
    T(p)=0,
    \qquad s(p)=0.
\]
At $p$,
\[
    ds=N e^\flat\neq0,
    \qquad
    dw=0.
\]
Hence, $s$ is a submersion around $p$.
Choose a local product chart
\[
    \Phi:(-\varepsilon,\varepsilon)\times D\longrightarrow U\subset M,
    \qquad
    \Phi(0,0)=p,
    \qquad
    s(\Phi(t,z))=t,
\]
where $D\subset\R^2$ is a disc, and identify
$
    \mathbb R^3=\mathbb R\ell\oplus\ell^\perp
$.
Thus, for each fixed $s$, the map $z\mapsto\Phi(s,z)$
parametrizes the level surface
\[
    \Sigma_t:=\{q\in U:s(q)=t\}.
\]

Set $w_t:=w|_{\Sigma_t}$.
The rest of the proof is lengthy, we break into several claims.

\textbf{Claim 1.}
After shrinking $U$, the map $w_0$ is either constant or quasiregular
satisfying \eqref{eq:quasiregular}.

Let $A:=DT$ and $c:=\langle\ell,u\rangle$. Since $c(p)=1$, shrinking the neighborhood
if necessary, we may assume $c\ge \frac{2}{3}$.  Fix
$p'\in\Sigma_0$ such that $v(p')\neq 0$. Let $\tau\in T_{p'}\Sigma_0$,  write
\[
    \tau=\alpha e+\xi,
    \qquad
    \xi\perp e.
\]
 Write
$A\xi=v\mathcal R_{p'}\xi$ for an isometry
$\mathcal R_{p'}:e^\perp\to u^\perp$. The level-set condition gives
\[
    0=\ell\cdot A\tau
      =\alpha Nc+v\langle\ell,\mathcal R_{p'}\xi\rangle.
\]
Since the component of $\ell$ perpendicular to $u$ has length
$\sqrt{1-c^2}$,
\[
    \alpha^2N^2c^2=v^2\langle \ell, \mathcal{R}_{p'}\xi\rangle^2
    \le v^2(1-c^2)|\xi|^2.
\]
Moreover,
\[
    |A\tau|^2=\alpha^2N^2+v^2|\xi|^2,
    \qquad
    |A\tau|^2-v^2|\tau|^2
       =\alpha^2|X|^2\ge0.
\]
Since $\tau$ is perpendicular to $\nabla s$, we have 
$A\tau\perp \ell$; therefore, $Dw_0(\tau)=A\tau$. 
It follows that,
at every point where $v>0$,
\begin{equation}\label{eq:w-distortion}
    v|\tau|
    \le |Dw_0(\tau)|
    \le \frac{v}{c}|\tau|
    \le \frac{3}{2}v|\tau|.
\end{equation}
 At a point where $v=0$, then $Dw_0=0$. Hence, there exists $1\le \mathcal{K}\le \frac{3}{2}$ such that
\begin{equation} \label{eq:quasiregular}
  \sigma_{\max}(Dw_0)\le 
  \mathcal{K}\sigma_{\min}(Dw_0).  
\end{equation}

\textbf{Claim 2.}
If $w_0$ is nonconstant, then its local index at $p$ satisfies
$i(p,w_0)\ge2$.

Suppose that $w_0$ is nonconstant. Note that the boundedness of $\mathcal{K}$ implies  that $w_0$ is quasiregular.  By the Stoilow factorization theorem \cite[p.179]{AIM}, locally $w_0=\varphi\circ h$, where $h$ is
a $\mathcal{K}$-quasiconformal homeomorphism and $\varphi$ is holomorphic. Moreover, 
$w_0$ is open and discrete, and its positive local index
$i(p,w_0)$ is a positive integer equal to the vanishing order of the holomorphic
function
$\varphi-\varphi(h(p))$ at $h(p)$; see \cite[p.180]{AIM}.
Thus,
if $i(p,w_0)=1$,  then
$\varphi'(h(p))\neq0$. The  inverse function theorem shows that $\varphi$, and hence $w_0$, is locally invertible.
Consequently, $w_0$ is a local $\mathcal{K}$-quasiconformal homeomorphism. Mori's estimate \cite[pp.~81--82]{AIM} applied to the
inverse gives, in a local coordinate $z$ centered at $p$,
\[
    |w_0(z)-w_0(0)|\ge C^{-1}|z|^{\mathcal{K}}.
\]
On the other hand, $Dw_0(p)=0$ and smoothness give
\[
    |w_0(z)-w_0(0)|\le C|z|^2.
\]
However, we have a contradiction as $\mathcal{K}<\frac{3}{2}$. Hence
$i(p,w_0)\ge2$.

\textbf{Claim 3.} $w_0$ is constant.

Suppose otherwise. By Claim~2, $i(p,w_0)\ge 2$.
Since $w_0$ is discrete, choose a disk $D'\subset D$ such that $p$ is
the only point of $\Phi(\{0\}\times\overline{D'})$ mapped to $w_0(p)$.
For sufficiently small $\varepsilon'>0$, set
\[
    \Omega:=\Phi((-\varepsilon',\varepsilon')\times D').
\]
The homotopy
\[
    H_\lambda(t,z)
      :=T(p)+t\ell+w\bigl(\Phi((1-\lambda)t,z)\bigr),
    \qquad 0\le\lambda\le1,
\]
joins $T\circ\Phi$ to the product map
\[
    (t,z)\longmapsto T(p)+t\ell+w\bigl(\Phi(0,z)\bigr).
\]
The homotopy $H_\lambda$ avoids $T(p)$
on the boundary of $\Omega$. Indeed, if
$H_\lambda(t,z)=T(p)$, then the orthogonal decomposition
$\mathbb R^3=\mathbb R\ell\oplus\ell^\perp$ first gives $t=0$ and then
$
    w_0(\Phi(0,z))=w_0(p)
$.

By homotopy invariance of the Brouwer degree
\cite[p.~147]{OutereloRuiz}, we obtain
\[
    \deg(T,\Omega,T(p))=i(p,w_0).
\]

Choose $\delta>0$ such that
\[
    B_\delta(T(p))\cap T(\partial\Omega)=\varnothing,
\]
and let
$
    x\in B_\delta(T(p))\setminus\mathcal P
$
be a globally regular value of $T$. Degree invariance gives
\[
    \deg(T,\Omega,x)=i(p,w_0).
\]
Since $\det DT>0$ at every regular preimage, all local degrees are
positive. Proposition~\ref{prop:degreeone} then yields
\[
    1=\deg_{\mathcal P}(T,x)
      \ge\deg(T,\Omega,x)
      =i(p,w_0)\ge2,
\]
a contradiction.

\textbf{Conclusion.} 
Therefore, $w_0$ is constant, and $\Sigma_0\subset T^{-1}(T(p))$. The regularity of $\Sigma_0$ follows from $\nabla s=\ell\cdot DT$ being $C^2$.

\end{proof}

The following proposition rules out branching of the Tod fibers along
$Z_M$.

    \begin{proposition}
\label{prop:fibers}
For every $x\in T(Z_{M})$, each connected component of
\begin{equation}\label{eq:zero-fibre}
 F_{x}:=Z_{M}\cap T^{-1}(x)
\end{equation}
is a compact, embedded, two-sided surface on which $T$ is constant.
\end{proposition}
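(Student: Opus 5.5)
The plan is to globalize the local plaques of Theorem~\ref{local plaque} and then read off the required properties. Fix $x\in T(Z_M)$. Compactness of $F_x$ comes first: $Z_M$ is compact, as shown at the start of this section, and $T^{-1}(x)$ is closed. Hence $F_x$ is a closed subset of a compact set, and $T\equiv x$ on it by definition. The substantive claims are therefore that each component is an embedded surface and that it is two-sided.

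\textbf{Step 1: every point of $F_x$ lies in a plaque inside $F_x$.} Let $p\in F_x$. Theorem~\ref{local plaque} gives a $C^2$ disk $\Sigma_0$ with $p\in\Sigma_0\subset Z_M\cap T^{-1}(T(p))=F_x$. I would now show that near $p$ the set $F_x$ coincides with $\Sigma_0$, using the coordinates of that proof.
\begin{itemize}[leftmargin=*]
\item \emph{At points of $\partial Z_M$.} Here $ds=Ne^\flat\neq0$ with $s=\ell\cdot T$. Every point of $F_x$ satisfies $s=\ell\cdot x$, so near $p$ it lies in the single level surface $\Sigma_0=\{s=0\}$. Conversely $\Sigma_0\subset F_x$. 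Thus $F_x\cap U=\Sigma_0$ is an embedded disk.
\item \emph{At interior points of $Z_M$.} $DT$ has constant rank one with image spanned by $u$. The function $s=u(p)\cdot T$ is again a local submersion, so the same argument applies once one shows that $T$ is constant on its level sets. This follows from the constant-rank theorem, since the fibers of a rank-one map are exactly the level sets of one coordinate.
\end{itemize}
In both cases $F_x$ is locally a single $C^2$ embedded disk, namely a level set of a local submersion $s$. Hence each component of $F_x$ is a compact embedded surface without boundary.

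\textbf{Step 2: two-sidedness.} Locally the field $e=X/|X|$ is normal to the plaques, because $ds=Ne^\flat$. Moreover $|X|=N>0$ on $Z_M$, so $X$ is a globally defined nonvanishing vector field along $F_x$, and it is transverse to $F_x$ everywhere. This gives a global unit normal, so each component is two-sided. As an alternative route, $M$ is simply connected by Lemma~\ref{lem:simplyconnected}, and a compact embedded surface in a simply connected orientable 3-manifold is two-sided.

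\textbf{Main obstacle.} The delicate point is Step 1 at the interior points of $Z_M$. There one must check that $X$ is transverse to the constant-rank fibers, i.e.\ that $\ker DT=e^\perp$, so that the normals patch consistently with the boundary charts. This follows from Lemma~\ref{T}: with $v=0$ one has $T^*\delta=X^\flat\otimes X^\flat$, hence $\ker DT=X^\perp$. A second concern is to rule out two distinct plaques through the same point of $F_x$ accumulating or crossing. The level-set description $F_x\cap U=\{s=\ell\cdot x\}$ excludes this, because it describes the whole of $F_x$ near the point, not just one plaque.
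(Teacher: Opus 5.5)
Your proposal is correct and follows essentially the paper's argument. You get compactness of $F_x$ as a closed subset of $Z_M$, identify $F_x\cap U$ locally with the level set $\{s=s(p)\}$ of the submersion $s=\ell\cdot T$ (which excludes several plaques meeting at one point), and obtain two-sidedness from $\ker DT=X^\perp$ on $Z_M$, so that $X/N$ is a global unit normal. The only difference is cosmetic: the paper treats interior and boundary points of $Z_M$ together via $T_p\Sigma=\ker(DT)_p=\ker(ds)_p$ and the inverse function theorem, while you separate the two cases and use the constant-rank theorem at interior points.
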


\begin{proof}
The set $F_x$ is compact because it is closed in the compact set
$Z_M$. Fix $p\in F_x$ and use the notation from the proof of
Theorem~\ref{local plaque}. In particular,
\[
    s=\ell \cdot\bigl(T-T(p)\bigr)
\]
is a submersion near $p$, and there is a smooth embedded disk
\[
    p\in\Sigma\subset Z_M\cap T^{-1}(T(p))=F_x.
\]
Since $T$ is constant on $\Sigma$,
\[
    T_p\Sigma
    =\ker (DT)_p
    =\ker(ds)_p
    =T_p\{s=0\},
\]
where $T_p$ denotes the tangent space at $p$. 
Therefore, after shrinking the neighborhood $U$ and the plaque $\Sigma$, the
inverse function theorem gives
\[
    \Sigma=\{s=0\}\cap U.
\]
On the other hand, if $p'\in F_x\cap U$, then
$T(p')=x=T(p)$, and hence $s(p')=0$. Thus
\[
    F_x\cap U\subset\{s=0\}\cap U=\Sigma.
\]
The reverse inclusion follows from the definition of $\Sigma$, so
$
    F_x\cap U=\Sigma.
$
Hence, $F_x$ is a smooth embedded surface without boundary.

Each connected component of $F_x$ is closed in the compact set $F_x$
and is therefore compact. Moreover,
\[
    T_pF_x=\ker (DT)_p=(X^\perp)_p.
\]
Since $|X|=N>0$ on $Z_M$, the vector field $X/N$ is a globally defined
unit normal along $F_x$. Consequently, every connected component of
$F_x$ is two-sided.
\end{proof}

\subsection{Exclusion of the null set}

Fix $x_0\in T(Z_M)$ and a connected component $\Sigma$ of the fiber
$F_{x_0}$. By Proposition~\ref{prop:fibers}, $\Sigma$ is a compact embedded
surface without boundary. Along $\Sigma$,
$
    \nu:=N^{-1}X$
is a globally defined unit normal.

Let $\rho$ be the signed tubular coordinate determined by
$\partial_\rho=\nu$ on $\Sigma$. Since $T(\Sigma)=x_0$ and $\mathcal V=0$ on
$\Sigma$, Taylor expansion gives, uniformly on $\Sigma$,
\begin{equation}\label{eq:collar-expansion}
    T=x_0+\rho \mathbf{a}(y)+O(\rho^2),
    \qquad
    \mathcal V=\rho \mathbf{b}(y)+O(\rho^2),
\end{equation}
where $y\in \Sigma$ and
\[
    \mathbf{a}=DT(\nu)|_\Sigma,
    \qquad
    \mathbf{b}=\partial_\nu \mathcal{V}|_\Sigma.
\]
The identity
\[
    T^*\delta=|\mathcal{V}|^2g+X^\flat\otimes X^\flat
\]
and $X=N\nu$ on $\Sigma$ imply
    $|\mathbf{a}|=N>0.$

\begin{lemma}
Let 
\[\widehat{\mathbf{a}}:=\frac{\mathbf{a}}{|\mathbf{a}|}:\Sigma\longrightarrow S^2.\]
Then
\begin{equation}\label{eq:half-collar-degree}
    \operatorname{deg}(\widehat{\mathbf{a}})
    =
    \frac{1}{4\pi}
    \int_\Sigma
    \frac{|\partial_\nu \mathcal{V}|^2}{N^2}\,dA_\Sigma
    \in\mathbb Z_{\ge0}.
\end{equation}
\end{lemma}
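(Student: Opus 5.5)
The plan is to compute the pullback of the area form of $S^2$ under $\widehat{\mathbf a}$ directly, using the first-order structure of $T$ and $\mathcal V$ in the collar around $\Sigma$.

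\textbf{Step 1: the Jacobian of $\widehat{\mathbf a}$.} The collar identity \eqref{eq:collar-expansion} together with Lemma~\ref{T} gives $\det DT = Nv^2\,d\mu_g$. Consider the map $\Psi(\rho,y)=(T-x_0)/\rho$ on the half-collar $\rho>0$. Restricted to small $\rho$, it converges in $C^1$ to $y\mapsto \mathbf a(y)$ on $\Sigma$, extended radially. Equivalently, the degree of $\widehat{\mathbf a}$ equals the degree of $(T-x_0)/|T-x_0|$ restricted to $\{\rho=\epsilon\}$ for small $\epsilon>0$.

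\textbf{Step 2: the tangential derivative of $\mathbf a$.} I would differentiate $\mathbf a=DT(\nu)$ along $\Sigma$ and use $DT(\tau)=0$ on $\Sigma$ for tangential $\tau$. Torsion-freeness and the closedness of $Y,\widetilde Y,\widehat Y$ (i.e. symmetry of $\nabla DT$) give $\nabla_\tau \mathbf a=\nabla_\nu(DT)(\tau)$. To evaluate this, expand $DT(\tau)$ to first order in $\rho$. Since $T^*\delta=|\mathcal V|^2 g+X^\flat\otimes X^\flat$ and $DT$ maps $X^\perp$ conformally with factor $v=|\mathcal V|$ into $u^\perp$, we can write $DT|_{X^\perp}=\mathcal V\cdot(\text{complex-linear isometry } e^\perp\to u^\perp)$. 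Here the phase of $\mathcal V$ encodes a rotation; I would verify this from the spinorial formulas for $Y,\widetilde Y,\widehat Y$ in terms of $L$ and $K$. One then obtains
\[
\nabla_\tau\mathbf a=\rho^{-1}DT(\tau)+O(\rho)=\mathcal R_y(\mathbf b(y)\tau),
\]
where $\mathbf b\in\mathbb C$ acts on $T_y\Sigma$ as a complex number via the complex structure induced by $\nu$, and $\mathcal R_y:T_y\Sigma\to \mathbf a^\perp$ is an orientation-compatible isometry. Consequently $D\mathbf a|_{T\Sigma}$ maps into $\mathbf a^\perp$; this is consistent with $|\mathbf a|=N$, which forces the radial part to be $dN$. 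It is also conformal with factor $|\mathbf b|$ and orientation preserving, the last point coming from $\det DT\ge0$ and Lemma~\ref{T}.

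\textbf{Step 3: integrate.} Since $\widehat{\mathbf a}=\mathbf a/N$, its differential is the tangential component divided by $N$, so $\widehat{\mathbf a}^*dA_{S^2}=|\mathbf b|^2N^{-2}\,dA_\Sigma$. Integrating over the closed surface $\Sigma$ gives $4\pi\deg\widehat{\mathbf a}$. Nonnegativity of the integrand then gives $\deg\ge0$, and integrality holds because $\Sigma$ is compact (Proposition~\ref{prop:fibers}).

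The main obstacle is Step 2: showing that the first-order normal variation of $DT|_{T\Sigma}$ is exactly $\mathbf b$ acting conformally and with the correct orientation. This requires the precise algebraic relation between the phase of $\mathcal V=L+\mathrm iK$ and the rotation in $DT|_{X^\perp}$. I would first try to establish it in the Pauli-matrix model of the proof of the first theorem of the section, checking that $(Y,\widetilde Y,\widehat Y)$ restricted to $X^\perp$ equals $\operatorname{Re}$ and $\operatorname{Im}$ of $\mathcal V$ times a fixed orthonormal frame.
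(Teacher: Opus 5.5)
Your overall plan matches the paper's: obtain $\widehat{\mathbf a}^{\,*}dA_{S^2}=|\mathbf b|^2N^{-2}\,dA_\Sigma$ from the first-order collar data \eqref{eq:collar-expansion}, then integrate over the closed surface $\Sigma$. (Your Step~1 is not needed for this lemma; it belongs to Lemma~\ref{lem:half-collar}.) The gap is Step~2. You leave it as an unverified spinorial claim, and it aims at the wrong target. Neither the phase of $\mathcal V$ nor a factorization of $DT|_{X^\perp}$ as $\mathcal V$ times a smooth complex-linear isometry ever enters. The pullback of $dA_{S^2}$ only sees a Jacobian, and that is already in Lemma~\ref{T}.

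Step~2 is also misstated: $D\mathbf a|_{T\Sigma}$ does not take values in $\mathbf a^\perp$. Since $|\mathbf a|=N$ along $\Sigma$, the $\widehat{\mathbf a}$-component of $\nabla_\tau\mathbf a$ is $\tau(N)$. From $\nabla N=-k(X,\cdot)-LE+KB$ with $L=K=0$ and $X=N\nu$ on $\Sigma$, this equals $-N\,k(\nu,\tau)$, and nothing forces it to vanish. This radial part drops out in Step~3, but it contradicts your formula $\nabla_\tau\mathbf a=\mathcal R_y(\mathbf b\tau)$.

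The missing step is a comparison of $\rho^2$-coefficients in Lemma~\ref{T}, exactly as the paper does. Use Fermi coordinates $(\rho,y^1,y^2)$, positively oriented via $d\rho\wedge dA_\Sigma=d\mu_g$. Then $\partial_\rho T=\mathbf a+O(\rho)$ and $\partial_iT=\rho\,\partial_i\mathbf a+O(\rho^2)$, so the signed Jacobian identity gives
\[
\rho^2\det(\mathbf a,\partial_1\mathbf a,\partial_2\mathbf a)+O(\rho^3)
=N|\mathcal V|^2\sqrt{\det g}
=\rho^2N|\mathbf b|^2\sqrt{\det g_\Sigma}+O(\rho^3).
\]
On the other hand, $\widehat{\mathbf a}^{\,*}dA_{S^2}=|\mathbf a|^{-3}\det(\mathbf a,\partial_1\mathbf a,\partial_2\mathbf a)\,dy^1dy^2$. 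This expression ignores the radial components of $\partial_i\mathbf a$, which is why the paper never needs to describe $D\mathbf a$ itself.

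If you want to keep your conformal formulation, expand $T^*\delta=|\mathcal V|^2g+X^\flat\otimes X^\flat$ to order $\rho^2$. Use $g_{\rho i}=0$, $X(\partial_\rho)=N+O(\rho)$, and $X(\partial_i)=\rho c_i+O(\rho^2)$ with $c_i:=\partial_\rho(X(\partial_i))|_{\rho=0}$. This gives $\partial_i\mathbf a\cdot\partial_j\mathbf a=|\mathbf b|^2g_{ij}+c_ic_j$ and $\partial_i\mathbf a\cdot\widehat{\mathbf a}=c_i$. Hence the $\mathbf a^\perp$-part of $D\mathbf a|_{T\Sigma}$ is conformal with factor $|\mathbf b|$, and its orientation is fixed by $\det DT=N|\mathcal V|^2\ge0$. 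Either way, no spinor computation beyond Lemma~\ref{T} is needed.
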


\begin{proof}
Orient $\Sigma$ by
\[
    d\rho\wedge dA_\Sigma=d\mu_g
\]
and let $(y^1,y^2)$ be positively oriented local coordinates on
$\Sigma$. From \eqref{eq:collar-expansion},
\[
    \partial_\rho T=\mathbf{a}+O(\rho),
    \qquad
    \partial_iT=\rho\partial_i \mathbf{a}+O(\rho^2),
\]
and therefore
\begin{equation}\label{eq:jacobian-expansion}
    \det(\partial_\rho T,\partial_1T,\partial_2T)
    =
    \rho^2\det(\mathbf{a},\partial_1\mathbf{a},\partial_2\mathbf{a})+O(\rho^3).
\end{equation}
On the other hand,
\[
    |\mathcal{V}|^2=\rho^2|\mathbf{b}|^2+O(\rho^3).
\]
Using the signed Jacobian identity
\[
    T^*(dx^1\wedge dx^2\wedge dx^3)
    =
    N|\mathcal{V}|^2\,d\mu_g,
\]
and comparing the coefficients of $\rho^2$, we obtain
\begin{equation}\label{eq:a-determinant}
    \det(\mathbf{a},\partial_1\mathbf{a},\partial_2\mathbf{a})\,dy^1dy^2
    =
    N|\mathbf{b}|^2\,dA_\Sigma.
\end{equation}
Since
\[
    \widehat{\mathbf{a}}^{\,*}dA_{S^2}
    =
    \frac{\det(\mathbf{a},\partial_1\mathbf{a},\partial_2\mathbf{a})}
         {|\mathbf{a}|^3}\,dy^1dy^2,
\]
$|\mathbf{a}|=N$ and equation \eqref{eq:a-determinant} give
\[
    \widehat{\mathbf{a}}^{\,*}dA_{S^2}
    =
    \frac{|\mathbf{b}|^2}{N^2}\,dA_\Sigma.
\]
Integration proves \eqref{eq:half-collar-degree}.
\end{proof}
Next, we study the half collar regions around $\Sigma$. Set
\[
\Omega_\varepsilon^+:=\{0<\rho<\varepsilon\},
    \qquad
    \Omega_\varepsilon^-:=\{-\varepsilon<\rho<0\}.
\]

\begin{lemma} \label{lem:half-collar}
For every sufficiently small $\varepsilon>0$, there exists $\delta>0$
such that
\begin{equation}\label{eq:two-half-degrees}
    \deg(T,\Omega_\varepsilon^\pm,x)
    =
    \deg(\widehat{\mathbf a})
\end{equation}
for every
$
    x\in B_\delta(x_0)\setminus\{x_0\}$. 
    Moreover, if $\deg(\widehat{\mathbf a})=0$, then there exist
$\varepsilon_+,\varepsilon_->0$ such that
\[
    \mathcal V\equiv0
    \quad\text{on}\quad
    \Omega_{\varepsilon_+}^+
    \ \text{and}\ 
    \Omega_{\varepsilon_-}^-.
\]
\end{lemma}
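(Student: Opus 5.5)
The plan is to compare $T$ on each half collar with its first-order model $\rho\mapsto x_0+\rho\mathbf a(y)$, and to read off the degree from the Gauss-type map $\widehat{\mathbf a}$.

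\textbf{Step 1: $x_0\notin T(\partial\Omega^\pm_\varepsilon)$ for small $\varepsilon$.} The boundary of $\Omega^+_\varepsilon$ is $\Sigma\cup\{\rho=\varepsilon\}$. By \eqref{eq:collar-expansion} and $|\mathbf a|=N\ge c>0$ on the compact $\Sigma$, we have $|T-x_0|\ge c\varepsilon/2$ on $\{\rho=\varepsilon\}$. Hence $x_0\notin T(\{\rho=\varepsilon\})$. The set $T(\Sigma)$ is the single point $x_0$, so any $x\ne x_0$ avoids it. Choosing $\delta<c\varepsilon/2$ makes $B_\delta(x_0)\setminus\{x_0\}$ a connected set disjoint from $T(\partial\Omega_\varepsilon^+)$. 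Therefore $\deg(T,\Omega^+_\varepsilon,x)$ is constant there.

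\textbf{Step 2: compute this degree by rescaling and homotopy.} Pick $x=x_0+\eta\xi$ with $\xi\in S^2$ and $\eta\ll\varepsilon$. Rescale by $\rho=\eta t$, $t\in(0,\varepsilon/\eta)$, and consider
\[
H_\lambda(t,y)=x_0+\eta\bigl(t\mathbf a(y)+\lambda\,\eta^{-1}O(\eta^2t^2)\bigr),\qquad \lambda\in[0,1].
\]
It is cleaner to write $H_\lambda$ as the straight-line homotopy between $T$ and $x_0+\rho\mathbf a(y)$ on $\{0\le\rho\le\varepsilon\}$. Along it, $|H_\lambda-x_0|\ge \rho(c-C\rho)>0$ for $\rho>0$. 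The solutions of $H_\lambda=x$ stay in $\{\rho\approx\eta/N\}$, hence away from $\{\rho=\varepsilon\}$. On $\Sigma$ itself $H_\lambda=x_0\ne x$. So the degree is unchanged, and it suffices to compute it for the model map $(\rho,y)\mapsto x_0+\rho\mathbf a(y)$. Write this model as the composition of $(\rho,y)\mapsto(\rho N(y),\widehat{\mathbf a}(y))$ with polar coordinates. The preimages of $x_0+\eta\xi$ are exactly the points $\widehat{\mathbf a}^{-1}(\xi)$ at radius $\eta/N$. With the orientation $d\rho\wedge dA_\Sigma$ and the outward radial direction in the target, the local signs match those of $\widehat{\mathbf a}$. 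So the degree is $\deg(\widehat{\mathbf a})$. For $\Omega^-_\varepsilon$ both the normal direction and the sign of $\rho$ flip: $\rho<0$ maps to $-\widehat{\mathbf a}$, and the orientation $d\rho$ reverses. The antipodal map on $S^2$ has degree $-1$. These two sign changes cancel, giving $\deg(\widehat{\mathbf a})$ again. This sign bookkeeping is the step I expect to be most error-prone, and I would verify it against \eqref{eq:a-determinant}, i.e. against $\det DT>0$.

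\textbf{Step 3: the vanishing statement.} If $\deg(\widehat{\mathbf a})=0$, then \eqref{eq:half-collar-degree} gives $\partial_\nu\mathcal V=0$ on $\Sigma$, so $\mathcal V=O(\rho^2)$. First-order information alone will not suffice here, and this is the main obstacle. I would use the degree instead. Since $\det DT=N|\mathcal V|^2\ge0$ and every regular value has positive local degrees, $\deg(T,\Omega^\pm_\varepsilon,x)=0$ forces each regular $x\in B_\delta(x_0)$ to have no preimage in $\Omega_\varepsilon^\pm$ near $\Sigma$. Then Sard's theorem shows that $T(\Omega^\pm_{\varepsilon'})\cap B_\delta(x_0)$ consists of critical values, which form a null set. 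By the area formula,
\[
\int_{\Omega^\pm_{\varepsilon'}\cap T^{-1}(B_\delta)}N|\mathcal V|^2\,d\mu_g=0 .
\]
For $\varepsilon'$ small, continuity gives $T(\Omega^\pm_{\varepsilon'})\subset B_\delta(x_0)$, so $\mathcal V\equiv0$ on $\Omega^\pm_{\varepsilon'}$. This avoids any higher-order Taylor analysis.
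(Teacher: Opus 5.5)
Your proposal is correct and follows essentially the same route as the paper: the same first-order collar expansion, the same reduction to $\deg(\widehat{\mathbf a})$ with the antipodal/orientation sign cancellation on $\Omega^-_\varepsilon$, and the same use of $\det DT=N|\mathcal V|^2\ge 0$ together with Sard's theorem to force $\mathcal V\equiv 0$ when the degree vanishes. The differences are cosmetic. You compute the half-collar degree by a straight-line homotopy to the linear model $x_0+\rho\mathbf a(y)$, where the paper uses the boundary formula for $(T-x)/|T-x|$. You also close the argument with the area formula (the image of the half-collar in $B_\delta(x_0)$ is a null set), where the paper uses an inverse-function-theorem contradiction.
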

\begin{proof}
Since $\Sigma$ is compact and $|\mathbf a|=N>0$ on $\Sigma$, the first equation in
\eqref{eq:collar-expansion} implies that, for every sufficiently small $\varepsilon>0$, $T(\{\rho=\pm \varepsilon\})$ stays a positive distance from $x_0$. Choose
$\delta>0$ smaller than both distances.

Fix a regular value
$   x\in B_\delta(x_0)\setminus\{x_0\}$. 
On the inner boundary $\{\rho=0\}$, the normalized map
\[
    \frac{T-x}{|T-x|}:\; \partial\Omega^+_\varepsilon\to S^2
\]
is constant and therefore has degree zero. On the outer boundary
$\{\rho=\varepsilon\}$ of $\Omega_\varepsilon^+$, first
by moving the target from $x$ to $x_0$ and then by using
\eqref{eq:collar-expansion}, it is homotopic to $\widehat{\mathbf a}$. By the boundary representation of the Brouwer degree
\cite[p.~157]{OutereloRuiz},
\[
    \deg(T,\Omega_\varepsilon^+,x)
    =\deg\left( \frac{T-x}{|T-x|}\right)
    =\deg(\widehat{\mathbf a}).
\]

For $\Omega_\varepsilon^-$, the inner boundary $\{\rho=-\varepsilon\}$ has the
opposite boundary orientation, while its normalized leading term is
$-\widehat{\mathbf a}$. Since the antipodal map on $S^2$ has degree
$-1$, these two signs cancel. Hence
\[
    \deg(T,\Omega_\varepsilon^-,x)
    =
    \deg(\widehat{\mathbf a}),
\]
which proves \eqref{eq:two-half-degrees}.

When $x$ is a critical value, we can approximate $x$ by regular values to obtain \eqref{eq:two-half-degrees}.

Suppose now that $\deg(\widehat{\mathbf a})=0$ and fix one side of
$\Sigma$. Assume, toward a contradiction, that $\mathcal V$ does not
vanish on any smaller half-collar on that side. Then there are points
arbitrarily close to $\Sigma$ at which $\mathcal V\ne0$. At each such
point,
\[
    \det DT=N|\mathcal V|^2>0,
\]
so $DT$ is invertible. By the inverse function theorem and Sard's
theorem, one can choose a globally regular value
\[
    x\in B_\delta(x_0)\setminus
       \bigl(\{x_0\}\cup\mathcal P\bigr)
\]
having a preimage in that half-collar. Every preimage there has
positive local degree, and at least one such preimage exists.
Consequently,
\[
    \deg(T,\Omega_\varepsilon^\pm,x)>0,
\]
contradicting \eqref{eq:two-half-degrees}. Thus $\mathcal V$ vanishes
on a possibly smaller half-collar on the chosen side. Applying the
same argument to the other side completes the proof.
\end{proof}

\begin{theorem}\label{thm:no-null-set}
Under the hypotheses of Theorem \ref{T:main} and $\mathcal{E}>|P|$, we have
\[
    Z_M=\varnothing.
\]
Consequently, $v$ is nowhere zero.
\end{theorem}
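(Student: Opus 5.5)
We argue by contradiction and assume $Z_M\neq\varnothing$. Fix $x_0\in T(Z_M)$ and a connected component $\Sigma$ of $F_{x_0}$. By Proposition~\ref{prop:fibers}, $\Sigma$ is a compact embedded two-sided surface on which $T\equiv x_0$. We will compare the half-collar degrees of Lemma~\ref{lem:half-collar} with the global degree bound of Proposition~\ref{prop:degreeone}, and split into two cases according to $d:=\deg(\widehat{\mathbf a})$, which is a nonnegative integer by \eqref{eq:half-collar-degree}.

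\emph{Case $d\ge1$.} Choose $\varepsilon$ and $\delta$ as in Lemma~\ref{lem:half-collar}. Since $\mathcal P$ is finite and regular values are dense (Sard), we can pick a globally regular value $x\in B_\delta(x_0)\setminus(\{x_0\}\cup\mathcal P)$. The half-collars $\Omega_\varepsilon^+$ and $\Omega_\varepsilon^-$ are disjoint open sets. Every regular preimage has local degree $+1$ because $\det DT=Nv^2>0$ there. Hence
\[
\#T^{-1}(x)\ \ge\ \deg(T,\Omega_\varepsilon^+,x)+\deg(T,\Omega_\varepsilon^-,x)=2d\ge2 .
\]
This contradicts Proposition~\ref{prop:degreeone}, which says $T^{-1}(x)$ consists of exactly one point. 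The point $x$ must avoid $\mathcal P$ so that the global count is the $\deg_{\mathcal P}$ count; preimages inside $T^{-1}(\mathcal P)$ cannot occur since $x\notin\mathcal P$.

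\emph{Case $d=0$.} By Lemma~\ref{lem:half-collar}, $\mathcal V\equiv0$ on half-collars on both sides of $\Sigma$, so $\Sigma$ lies in $\operatorname{Int}_M Z_M$. We then propagate this using an open--closed argument on $\operatorname{Int}Z_M$. On $\operatorname{Int}Z_M$ the rank of $DT$ is one, and $T$ is constant on the leaves $X^\perp$, each of which is locally a plaque by Theorem~\ref{local plaque}. Let $\mathcal W$ be the connected component of $\operatorname{Int} Z_M$ containing $\Sigma$. Since $Z_M$ is compact, $\overline{\mathcal W}$ is compact, and because $v>0$ near the ends, $\partial\mathcal W\neq\varnothing$.

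We aim to show that a boundary leaf reproduces the $d\ge1$ situation. Take $p\in\partial\mathcal W\subset\partial Z_M$ and the fiber component $\Sigma'\ni p$ of $F_{T(p)}$, which is again a closed surface by Proposition~\ref{prop:fibers}. Points with $v\ne0$ accumulate at $\Sigma'$. If $\deg(\widehat{\mathbf a}_{\Sigma'})=0$, Lemma~\ref{lem:half-collar} would force $\mathcal V\equiv0$ on a full collar of $\Sigma'$, placing $p$ in the interior of $Z_M$; this contradicts $p\in\partial Z_M$. So $\deg(\widehat{\mathbf a}_{\Sigma'})\ge1$, and the first case applied to $\Sigma'$ gives the contradiction. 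Note that this step does not actually require the propagation through $\mathcal W$: it suffices to pick any $p\in\partial Z_M$, which exists because $Z_M$ is nonempty, compact, and not all of $M$. So the argument reduces to taking $\Sigma$ through a boundary point of $Z_M$ from the start.

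\emph{Main obstacle.} The delicate point is the claim that at $p\in\partial Z_M$ the fiber component $\Sigma'$ through $p$ meets $\{v\neq0\}$ arbitrarily closely along a full collar. Points of $\{v\ne0\}$ near $p$ lie in some tubular collar of $\Sigma'$, since by Proposition~\ref{prop:fibers} the fiber near $p$ is exactly the plaque $\{s=0\}$. Lemma~\ref{lem:half-collar} (applied on each side) then shows that $d=0$ would force vanishing near all of $\Sigma'$, including near $p$. The contradiction therefore needs no extra ingredient, only care that $\rho$-collars cover a neighborhood of $p$. Once $Z_M=\varnothing$ is established, $v=\sqrt{N^2-|X|^2}$ is nowhere zero.
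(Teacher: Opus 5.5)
Your proposal is correct and is essentially the paper's argument. The case $\deg(\widehat{\mathbf a})\ge1$ is identical: two disjoint half-collars each contribute $\deg(\widehat{\mathbf a})$ positively oriented preimages, which contradicts $\deg_{\mathcal P}(T,x)=1$. In the case $\deg(\widehat{\mathbf a})=0$, taking $\Sigma$ through a point of $\partial Z_M$ is simply the contrapositive of the paper's step, which shows that every point of $Z_M$ has a neighborhood in $Z_M$ and concludes that $Z_M$ is open, closed, and not all of the connected manifold $M$; as you noticed yourself, the detour through $\mathcal W$ is unnecessary.
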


\begin{proof}
Suppose $Z_M\ne\varnothing$. Fix $x_0\in T(Z_M)$ and let $\Sigma$ be
any connected component of the null fiber $F_{x_0}$. If
$\operatorname{deg}(\widehat{\mathbf{a}})\ge1$, choose $\varepsilon$, $\delta$, and a globally regular
value
$
    x\in B_\delta(x_0)\setminus
       \bigl(\{x_0\}\cup\mathcal P\bigr)
$
as in Lemma~\ref{lem:half-collar}. Each of the two disjoint
half-collars has relative degree $\operatorname{deg}(\widehat{\mathbf{a}})$. Since every regular
preimage has positive sign,
\[
    1=\deg_{\mathcal P}(T,x)
      \ge 2\operatorname{deg}(\widehat{\mathbf{a}})
      \ge2,
\]
which is impossible. Hence, $\operatorname{deg}(\widehat{\mathbf{a}})=0$ for every
component of $F_{x_0}$.

Applying Lemma~\ref{lem:half-collar} on both sides of each such
$\Sigma$, it follows that every point of $Z_M$ has a neighborhood
contained in $Z_M$. Thus $Z_M$ is open. It is also closed, since
$Z_M$ is compact.  Therefore, we conclude that
$Z_M=\varnothing$.
\end{proof}

\appendix

\section{Embedding into a Majumdar-Papapetrou spacetime}

We summarize the known results showing how the non-existence of super-covariantly constant spinors of mixed causal types implies Theorem \ref{T:main}.

\subsection{The timelike case}

Suppose that $\psi$ is timelike everywhere.
First, we show that $(M,g,k)$ embeds into a so-called Israel-Wilson-Perj\'es (IWP) spacetime.
This goes back to Tod \cite{Tod}.
Consider the Killing development 
\begin{align*}
    \overline g=&
    -v^2dt^2+g+2X^\flat\otimes dt\\
    =&-v^2(dt-v^{-2}X^\flat)^2+g+v^{-2}X^\flat\otimes X^\flat.
\end{align*}
Using the property
    \begin{align*}
        T^\ast \delta =v^2g+X^\flat\otimes X^\flat,
    \end{align*}
we find that $\overline g$ describes an IWP spacetime.

\medskip

Next, we show that the above IWP spacetime must be of the Majumdar-Papapetrou family.
This argument is due to Chru\'sciel-Real-Tod \cite{CRT}.
Let $H=\frac{L+iK}{v^2}$. Note that $H$ is harmonic with respect to $\delta $.
Write $H=Ue^{i\theta}$.
Then $\omega=v^{-2}X^\flat$ satisfies
\begin{align*}
    \operatorname{curl}_\delta(\omega)=2U^2\nabla \theta.
\end{align*}
Note that this implies
\begin{align*}
    \operatorname{div}_\delta(U^2\nabla\theta)=0.
\end{align*}
Next, we integrate this identity over $M$.
Integration by parts yields
\begin{align*}
    \int_{\mathbb{R}^3} U^2|\nabla\theta|^2=0.
\end{align*}
Note that due to the asymptotics, there are no boundary contributions coming from either the asymptotically flat end or the cylindrical ends.
Consequently, $\theta$ is constant and $\overline g$ describes an MP metric.

\subsection{The null case}

Suppose that $\psi$ is null everywhere.
In this case, the argument follows exactly along the lines of Hirsch-Zhang \cite{HirschZhang}.
Although $E,B$ need not vanish, we still have $dX^\flat=0$ and obtain a foliation, and $\psi|\psi|^{-1}$ is parallel on each leaf of this foliation.
Note that $\psi$ does not correspond to a parallel spinor in the pp-wave spacetime, although its restriction is parallel along each null hypersurface; the electromagnetic fields rotate it as one moves in the wave direction.

\end{document}